\newtheorem{thm}{Theorem}
\newtheorem{lem}[thm]{Lemma}
\newtheorem{prop}[thm]{Proposition}
\newtheorem{defn}[thm]{Definition}
\newtheorem{prob}[thm]{Problem}
\newtheorem{alg}[thm]{Algorithm}
\newcommand{\eat}[1]{}
\newcommand{\leaves}{\mathcal{L}}
\newcommand{\edges}{\mathcal{E}}
\newcommand{\vertices}{\mathcal{V}}
\newcommand{\rsplits}{\Sigma}
\newcommand{\ersplits}{\Sigma_E}
\newcommand{\srsplits}{\Sigma_S}
\newcommand{\restr}[1]{|_{#1}}
\newcommand{\join}{\wedge}
\newcommand{\coalof}{\succeq}
\newcommand{\strictcoalof}{\succ}
\newcommand{\infimum}{\inf}
\newcommand{\pari}{(i)}
\newcommand{\parii}{(ii)}
\newcommand{\out}{\rho}
\newcommand{\starif}{\bigstar}
\newcommand{\sep}{|}
\newcommand{\fampath}{\mathsf{p}}
\newcommand{\drSPR}{d_{\operatorname{rSPR}}}
\newcommand{\constNJ}{\texttt{constNJ}}
\newcommand{\ocaml}{\texttt{ocaml}}
\newcommand{\tl}{\ell} 
\newcommand{\agr}{\alpha}
\newcommand{\BG}{CRF14\_BG}
\newcommand{\lengths}{\texttt{.lengths}}
\newcommand{\tre}{\texttt{.tre}}
\newcommand{\bF}{\ensuremath{\mathbf{F}}}
\newcommand{\bH}{\ensuremath{\mathbf{H}}}
\newcommand{\rqtilde}{\ensuremath{\hspace{-1pt}_{\mathbf{\tilde{\;}}}}}
\newcommand{\arxiv}[1]{#1}
\newcommand{\noarxiv}[1]{}
\renewcommand{\labelenumi}{(\roman{enumi})}
\title{\constNJ: an algorithm to reconstruct sets of phylogenetic trees satisfying pairwise topological constraints}
\author{Frederick A. Matsen\\
UC Berkeley Dept. Statistics\\
367 Evans Hall \#429\\
Berkeley, CA 94720-3860\\
USA\\
phone: +1 510 642 2450\\
fax: +1 510 642 7892\\
matsen@berkeley.edu\\
http://www.stat.berkeley.edu/\rqtilde matsen/
}
\begin{document}

\maketitle

\noarxiv{\newpage}


\begin{abstract}
  This paper introduces \constNJ, the first algorithm for phylogenetic reconstruction of sets of trees with constrained pairwise rooted subtree-prune regraft (rSPR) distance. 
  We are motivated by the problem of constructing sets of trees which must fit into a recombination, hybridization, or similar network.
  Rather than first finding a set of trees which are optimal according to a phylogenetic criterion (e.g. likelihood or parsimony) and then attempting to fit them into a network, \constNJ\ estimates the trees \emph{while} enforcing specified rSPR distance constraints.
  The primary input for \constNJ\ is a collection of distance matrices derived from sequence blocks which are assumed to have evolved in a tree-like manner, such as blocks of an alignment which do not contain any recombination breakpoints.
  The other input is a set of rSPR constraints for any set of pairs of trees.
  \constNJ\ is consistent and a strict generalization of the neighbor-joining algorithm; it uses the new notion of ``maximum agreement partitions'' to assure that the resulting trees satisfy the given rSPR distance constraints.
\end{abstract}

\noarxiv{\newpage}
\section{Introduction}

Since the pioneering paper of \citet{sneathReticulate75}, tens of thousands of papers have been published on the subject of ``reticulate evolution.'' 
``Reticulate evolution'' has generally come to mean evolution where genetic material for a new lineage may come from two or more sources, as in the case of recombination and hybridization.
The Oxford English Dictionary (1989) defines ``reticulated'' to mean ``constructed or arranged like a net; made or marked so as to resemble a net or network.'' \nocite{reticulatedOED} 
Correspondingly, rather than evolutionary history being representable as a tree, a network is more appropriate.
A considerable amount of effort has gone into the phylogenetic reconstruction of these networks.

Algorithms for phylogenetics in the presence of reticulation have followed a curiously different path then the mainstream of phylogenetics. 
As surveyed below, current algorithms fall into three types: first, there are algorithms which attempt to find the phylogenetic network displaying some fixed characteristics (such as splits in an alignment or some set of trees) which contain the minimum number of reticulation events.
Secondly, there are algorithms to construct ``splits networks,'' which do an excellent job of representing conflicting signals in the data, but do not give an explicit evolutionary history.
The third approach is to sample from the posterior distribution of a population-genetics model, such as the coalescent with recombination.
None of these approaches furnish a practical solution for certain cases, such as HIV researchers who would like to reconstruct the evolutionary history of an alignment which includes recombinant sequences.
Indeed, first fixing a set of characteristics and then minimizing the number of reticulation events ignores the balance between number of reticulation events and phylogenetic optimality,
the splits network approach does not tell a complete evolutionary story,
and population-genetic algorithms are are not yet sufficiently fast for DNA sequence datasets which have thousands of nucleotides.
As there are no algorithms which are practical for doing phylogenetic reconstruction in this setting, HIV researchers who wish to reconstruct evolutionary history typically proceed in one of two ways: they either treat the whole alignment as having a single tree-like history, which cannot possibly be correct, or they build trees on sub-alignments independently, which does not take into account the underlying network structure.
These two extremes, of assuming all trees have the same topology or allowing their topologies to differ in arbitrary ways, leave a substantial gap in the middle, where the correct balance of optimality and discord should be found.

The goal of \constNJ\ is to begin filling this gap in a manner analogous to classical phylogenetic inference algorithms.
To do so, we make a different set of assumptions than has previously been done considering the input and desired output.
Regarding the data, we assume that that the given alignment has been segmented into ``alignment blocks'', each of which can be described by a single tree.
For example, in the case of recombination, the alignment blocks are the segments of an alignment which do not contain recombination breakpoints.
(Note that for the purposes of this paper we will be using the word ``recombination'' in the general sense, including processes such as gene conversion.)
Although the assumption that the data comes pre-segmented is a substantial one, we don't think that it is unreasonable.
From a practical standpoint, some assumption needs to be made, as algorithms which attempt to find a correct segmentation of the data and a sequence of trees simultaneously have a difficult time searching the complete space.
Furthermore, sometimes a segmentation is clear, such as the distinct RNA strands of the influenza genome.
Other times, such as for recombination, it is not so clear, but even in this more difficult case the inference of recombination breakpoints has seen significant progress in the last 10 years (reviewed below).
We will also assume that an outgroup has been selected. 
Such a choice is crucial, as it establishes directionality for reticulation events.

\begin{figure}
  \begin{center}
    \arxiv{\includegraphics[width=2.5in]{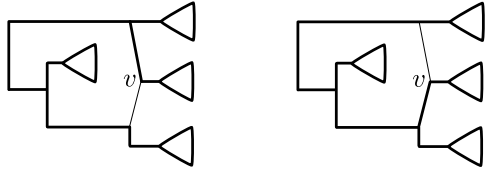}}
  \end{center}
  \caption{
  An example ``reticulate'' network and the two trees that it contains. Those two trees are related by a single rooted subtree prune regraft (rSPR) move, whereby the middle subtree is cut off of the tree and reattached at another location. The node $v$ will be called a reticulation node.}
  \label{fig:reticulate}
\end{figure}

Regarding desired output, rather than actually building a single reticulate network, this paper will focus on building ``correlated sets of trees,'' which display the sorts of constraints found on trees which fit in a reticulate network.
We are focused on building trees because each alignment block is correctly described by a single tree.
However, this set of trees must fit into a network, which forces constraints on their topology.
Specifically, the trees which sit in these networks must be related by rooted subtree-prune-regraft (rSPR) moves, whereby a rooted subtree is cut from the original tree and then re-attached in another location (Figure~\ref{fig:reticulate}).
We describe below how it is necessary for trees sitting in a reticulate network to be related by rSPR moves, though this is not a sufficient condition.

For \constNJ, we assume that the user can supply a series of constraints describing the number of rSPR moves allowable between pairs of alignment blocks.
For example, if the alignment contains ``pure'' types and a single class of recombinants which are derived from a pair of types, then there should be two alignment blocks and the trees for those blocks should be related by one rSPR move as in Figure~\ref{fig:reticulate}.
The challenge, then, is to reconstruct a set of trees which satisfy the constraints and which together optimize some phylogenetically relevant criterion, such as likelihood, parsimony, or balanced minimum evolution.
Note that \constNJ\ actually constructs a number of such sets of trees, in order to display the balance between optimality of the individual trees and the number of reticulation events needed to fit the trees together into a network.

We now present a motivating example.
The \BG\ circulating recombinant form (CRF) of HIV is known to be a mosaic of subtype B and subtype G viruses, and the breakpoints of the recombination events are known \citep{thomsonEaGalicia01}.
We will call the region of the alignment where BG derives from the G subtype the ``G region,'' and the region where BG derives from the B subtype the ``B region.''
As in \citet{thomsonEaGalicia01} and all similar papers we could find in the area, researchers build trees independently on the no-recombination blocks.
We have repeated such an analysis in Figure~\ref{fig:motivIndep}, building PHYML maximum likelihood phylogenetic trees using the F84 model and rooted using CPZ.CD.90.ANT.U42720 (removed from tree for clarity).
As one would hope, the trees do indeed show that the BG CRF derives one portion of its RNA from the G subtype, and the other part from the B subtype.
However, there are many more differences between the two trees than should occur for an alignment with a single recombinant strain. 
For example, the rooting changes between the two trees, as does the location of the C and the F-K clades.
Building a recombination network out of these trees would lead a number of spurious hypothesized recombinations.

\begin{figure}
  \begin{center}
    \arxiv{\includegraphics[width=5in]{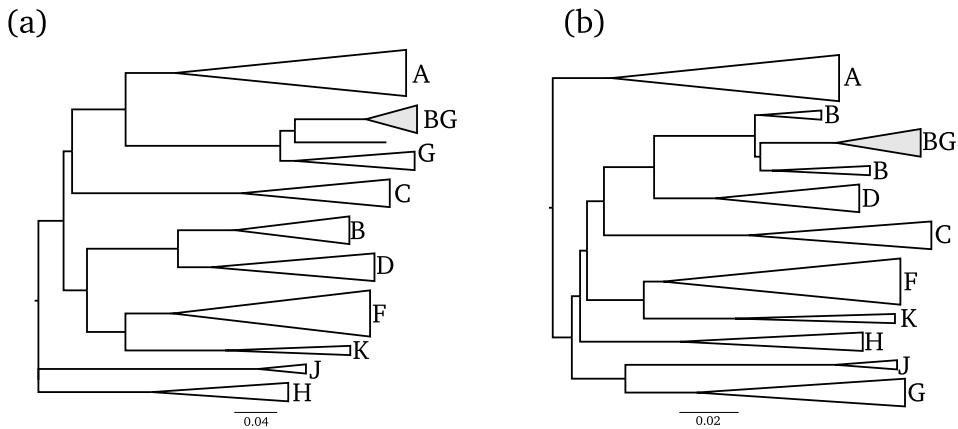}}
  \end{center}
  \caption{
  Phylogenetic trees of the pure subtypes of HIV and the BG recombinant clade constructed independently using the no-recombination blocks of the HIV genome. The single letters (e.g. A,B,C\dots) label clades of subtypes, and BG denotes a clade of circulating recombinant forms (CRFs) made from B and G subtypes. Tree (a) is built from the ``G region,'' i.e. the region where the BG CRF derives from the G subtype, and tree (b) is built from the ``B region,'' where BG derives from the B subtype. As noted in the text, although these trees do place the recombinant strains in the correct locations, they differ in a number of important ways which are not explained by recombination events. It is the perspective of this paper that these extra differences represent phylogenetic error, and that accuracy can be improved by constraining the trees to fit into a recombination network.
  }
  \label{fig:motivIndep}
\end{figure}

In contrast, for this dataset \constNJ\ returns a collection of pairs of trees displaying the balance between the number of allowed rSPR moves between pairs of trees and phylogenetic optimality.
This balance is described in the text output of \constNJ, which is shown in Table~\ref{tab:motiv} for the BG dataset. 
The first column shows the rSPR distance between the two reconstructed trees (in this case the G region tree and the B region tree).
As described below, the notion of optimality for \constNJ\ is total tree length, which is a trivial generalization of the balanced minimum evolution (BME) criterion \citep{desperGascuelBME04}.
It is displayed in the second column for the pairs of trees.
Thus the second line states that \constNJ\ found a pair of trees which differ by a single rSPR move, and which have total tree length about $7.119$.
The third column just shows the difference between the second column values between rows.
Thus $0.0942$ signifies that there is a decrease of magnitude $0.0942$ in total tree length by allowing a single rSPR difference between the two trees.

In this way we can achieve an understanding of the balance between phylogenetic optimality and number of recombination events.
For example, we can see that the decrease in allowing a single recombination event is significantly greater in magnitude than that for allowing two rather than one. 
And surprisingly, allowing nine rSPR moves does not significantly decrease the total tree length compared to allowing three.
Because the improvement in total tree length when allowing one rSPR move is significantly greater than that for any subsequent rSPR moves, we believe that Table~\ref{tab:motiv} suggests that the data probably arose from one recombination event, which agrees with the established knowledge concerning these taxa.

\arxiv{
\begin{table}[ht]
  \centering
  \begin{tabular}{c|c|c}
    rSPR distance & total tree length & tree length difference \\
    \hline
    0                   & 7.213 & 0.0942 \\
    1                   & 7.119 & 0.0076 \\
    2                   & 7.111 & 0.0149 \\
    3                   & 7.096 & 0.0139 \\
    9 (independent NJ)  & 7.082 & \\
  \end{tabular}
  \caption{
  The balance between discord and optimality for the example HIV dataset.
  On the left side is the number of SPR moves required to go from the tree built on the G region to the tree built on the B region.
  In the center is the total tree length (see Equation~\ref{eq:totTreeLength}), which is our notion of optimality.
  On the right is the difference of the total tree length between the rows.
  As described in the text, the largest drop in total tree length comes when allowing a single rSPR move (i.e. recombination event) between the two trees, indicating that one recombination event is needed to explain the data.
  }
  \label{tab:motiv}
\end{table}

}

Furthermore, the trees which \constNJ\ finds assuming a single recombination event agree with the accepted recombination history of the BG recombinant circulating form (Figure~\ref{fig:motivNonIndep}). 
In particular, the only difference between them is the location of the BG clade, which switches from the G to the B subclade depending on the region analyzed.
Importantly, these two trees can fit into a recombination network with a single reticulation node, in contrast to those in Figure~\ref{fig:motivIndep}.

\begin{figure}
  \begin{center}
    \arxiv{\includegraphics[width=5in]{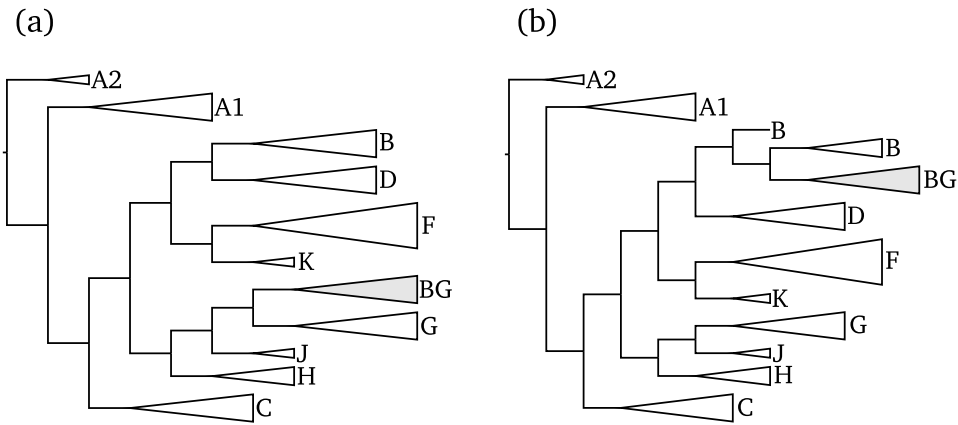}}
  \end{center}
  \caption{
  One pair of phylogenetic trees constructed for the same dataset using \constNJ. 
  In contrast to Figure~\ref{fig:motivIndep}, the only difference between the two trees is the location of the BG recombinant clade.
  These two trees fit into a recombination network with a single recombination event, as should be the case for a tree for the pure subtypes with a single recombinant strain like we have here.
  \constNJ\ correctly identifies that the BG subtype is a recombinant of the B and G subtypes.
  }
  \label{fig:motivNonIndep}
\end{figure}

We believe that \constNJ\ is the first algorithm of its kind, but will now review literature on related topics, starting with common terminology.
The currently accepted term for the class of networks including both hybridization and recombination networks is ``reticulate network''.\footnote{This terminology is redundant, as the word ``reticulate'' already means network-like.}
If we consider a rooted tree to be a directed graph such that edges are directed away from the root, a reticulate network is a rooted phylogenetic tree with additional directed edges making a directed acyclic graph with ``tree nodes'' of in-degree one and ``reticulation nodes'' of in-degree two \citep{husonEaGeneTreeNetwork05}.

A considerable amount of work has gone into the problem of constructing a reticulate network given a set of phylogenetic trees which it must contain. 
This problem was initiated by \citet{maddisonGeneTreeSpeciesTrees97} and considerable progress has been made by \citet{baroniEaReticulateFramework04}, \citet{nakhlehEaReticulateTheoryPractice05}, \citet{husonEaGeneTreeNetwork05}, and \citet{bordewichSempleHybridFPT07, bordewichSempleMinHybrid07}.
As described above, we differ from these approaches as we would like to estimate the trees while ensuring that they fit into a reticulate network.

A related problem (which was the original motivation for fitting trees into a network) is to reconcile distinct gene trees into a single species tree.
This problem has received an appropriately large amount of attention, and has found a more realistic model-based formulation in \citet{aneEaBayesGeneTree07} and \citet{edwardsEaBEST07}.
These differ from the present paper because they assume that there is a single species tree, and that ``correctness'' of a gene tree should in part be judged by the degree to which it fits within a species tree due to a coalescent model.
In our setting, however, there is no single species tree, and the coalescent model may not be appropriate.

Sometimes a related assumption is made, which is not that complete species trees are known, but that the resulting recombination network must display a specified collection of bipartitions, which are typically called splits.
This is equivalent to assuming a supplied alignment evolves according to the infinite sites model of mutation.
The problem again is to find a network which minimizes the number recombination events.
This problem was first formulated by \citet{hudsonKaplanStatRecomb85}, and was shown to be NP-hard in \citet{wangEaPerfectPNetsWRecomb01}. 
Progress was made in a sub-case by \citet{gusfieldEaPhyloNets04} and a simpler related (and in some ways more realistic) problem was solved by \citet{songHeinMinimalARG05}.
In \citet{husonKloepperRecombNet05}, the authors note that the algorithm in \citet{husonEaGeneTreeNetwork05} can be extended to this case.
Although a different formulation, this splits/infinite sites approach represents a different version of the same strategy: find the network displaying a certain set of characteristics which minimizes the number of reticulation events.

Splits network methods are a biologically useful and mathematically interesting way of understanding conflicting signals in phylogenetic data.
The first method to construct splits networks from distance data was the split decomposition approach of \citet{bandeltDressCanonicalDecomp92, bandeltDressSplitDecomp92}.
Another successful approach has been the ``neighbor-net'' algorithm created by \citet{bryantMoultonNeighborNet04} and further analyzed by \cite{levyPachterNeighborNet07}.
These methods form a useful complement to phylogenetic analysis in the traditional tree-based sense, but do not reconstruct an explicit evolutionary history. 
We also note that recombination networks need not be circular split systems, which are the sorts of splits networks returned by neighbor-net.
 
On the other end of the spectrum lie likelihood-based methods using the coalescent with recombination \citep{hudsonCoalRecomb83}.
Major recent advances have been made in this area. 
The full likelihood is quite daunting to compute, but \citet{lyngsoEaRecombParsimony08} have a parsimony-based approach which saves on computation by several orders of magnitude.
Importance sampling \citep{griffithsEaImportanceSampling08} is also promising, but is not yet efficient enough for the long alignments typically encountered in phylogenetics.
Also, it is the intent of this paper to construct a method which is independent of population genetics models such as the coalescent.

A related though distinct line of research is the inference of recombination breakpoints.
One of most basic and most commonly used methods for the inference of recombination breakpoints is called ``bootscanning'', whereby a window is scanned along the alignment and a phylogenetic tree is built for each position of the window; a change in topology between sections of the window can be interpreted as evidence for a recombination breakpoint \citep{loleEaBootscanning99}.
There are many different variations on this theme.
One promising line of research by Marc Suchard and collaborators apply multiple change-point models and reversible-jump MCMC to estimate trees and model parameters along the alignment \citep{suchardEaMCP03, mininEaDualBrothers05}.
We also note that sometimes recombination breakpoints can be seen ``with the naked eye'' as in \citet{thomsonEaGalicia01}.
In contrast to our paper, it is not the intent of these methods to accurately infer phylogeny; furthermore they do not posit any relationship between trees in neighboring no-recombination blocks.
Furthermore, some of the more computationally intensive methods actually require a fixed reference tree.

In summary, we are not aware of any available method for building reticulate networks which 
gives an explicit rooted phylogenetic history for each column of the alignment,
which elucidates the balance of discord between the trees and optimality for those trees,
and which is efficient enough to be useful for modern data sets.
The lack of practical phylogenetic algorithms in the presence of recombination was recently demonstrated in a simulation study by \citet{woolleyEaRecombinationShootoutPlos08}. 
\citet{hugginsYoshidaCophylogeny08} have noted the lack of useful reconstruction algorithms for host-parasite relationships and have noted the need for an algorithm which balances tree concordance and optimality as \constNJ\ does.
Although far from a complete solution for these cases, we believe that \constNJ\ is a first step in the right direction.

\noarxiv{\newpage}
\section{General description of \constNJ}

The primary input for \constNJ\ is a collection of alignment blocks, which as described are disjoint subsets of columns of the alignment which are assumed to evolve in a tree-like manner.
In the case of alignments with recombinant sequences, the alignment blocks are simply the no-recombination blocks.
Note that the alignment blocks need not be contiguous; for example a single recombination event with two recombination breakpoints will result in two, not three, alignment blocks. 
The other input for \constNJ\ is a sequence of constraints on the rSPR distance between the trees constructed for the no-recombination blocks as described below.
Given this input, the goal of \constNJ\ is to exhibit the balance between discordance among the alignment-block trees on one hand, and optimality of the trees in some phylogenetic sense on the other.

\constNJ\ is a deterministic distance-based approach to reconstruction; we chose this direction for several reasons. 
First, the underlying space for a likelihood optimization scheme is even larger than usual, making a heuristic search even less appealing: there are $[(2n-3)!!]^k$ $k$-tuples of rooted bifurcating phylogenetic trees on $n$ taxa. 
The sorts of constraints we will be imposing reduces this number substantially, but little is known about the resulting graph under the sorts of moves typically used in heuristic phylogenetic searches.
Furthermore, likelihood-based approaches are substantially improved by starting with a reasonable tree, which in modern applications is typically a distance-based tree.
Thus, even if a likelihood-based approach was the eventual goal, a distance-based approach would be useful as a ``seed'' for the heuristic likelihood search.
Finally, we feel that distance- and likelihood-based algorithms occupy distinct and complementary roles in the world of computational phylogenetics.

Our goal is to design an approach which generalizes the remarkably accurate and hugely popular neighbor-joining algorithm \citep{saitouNeiNJ87}.
Remarkably, it took almost 20 years for the phylogenetics community to learn the objective function of neighbor-joining; during that time it was even claimed that no such objective function existed.
However, it is now known that neighbor-joining greedily optimizes the ``tree length'' $\tl(T, D)$ (defined below in Equation~\ref{eq:treeLength}) for the given distance matrix $D$.
\constNJ\ generalizes this objective function, as it attempts to minimize the total length of all of all $k$ trees (\ref{eq:totTreeLength}) by a combination of greedy steps.

The trees resulting from \constNJ\ are constrained by the user to be some specified number of rooted subtree-prune-regraft (rSPR) moves from one to another.
As displayed in Figure~\ref{fig:reticulate}, reticulation events such as recombination and hybridization correspond to rSPR tree rearrangements.
The converse is not true: arbitrary rSPR tree rearrangement events need not correspond to reticulation events.
For recombination or hybridization to take place, the participants in the event need to exist at the same time; it is not hard to set up examples of rSPR move combinations which violate this fact (see, e.g., Song and Hein, 2005) \nocite{songHeinMinimalARG05}.
Methods have been developed which take timing restrictions into account \citep{songHeinMinimalARG05, bordewichSempleHybridFPT07}, but we do not incorporate these ideas into a phylogenetic reconstruction framework.
This may be an interesting avenue to for future research, but on the other hand seeing such timing violations can actually be informative.
First, there may be something wrong with the data.
Second, it has been noted \citep{baroniEaRealTime06} that reticulation networks can appear to violate timing constraints if certain taxa are not sampled. The problem of determining the minimal number of ``missing'' taxa required to explain timing constraints has been analyzed by \citet{linzEaAddTaxa09}.
Therefore we have left interpretation of timing issues up to the user of the program.

We now make a more formal statement of the problem \constNJ\ attempts to solve;
note that a similar formulation was made independently by \citet{hugginsYoshidaCophylogeny08} in the context of host-parasite relationships.
\begin{prob}[rSPR-constrained balanced minimum evolution]
  Given $k$ $n \times n$ distance matrices $D_1,\dots, D_k$ and a symmetric $k \times k$ constraint matrix $C$, 
  find the set of trees $T_1,\dots, T_k$ 
  minimizing $\sum_{i=1}^k \tl(T_i, D_i)$
  such that $\drSPR(T_i, T_j) \leq C_{i,j}$ for each $i$ and $j$.
  \label{prob:main}
\end{prob}

\begin{thm}
  \constNJ\ is a consistent algorithm to solve Problem~\ref{prob:main}.
\label{thm:main}
\end{thm}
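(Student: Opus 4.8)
The plan is to read the statement as two claims and prove each: first, that \constNJ\ terminates with a $k$-tuple of trees that is \emph{feasible} for Problem~\ref{prob:main}, i.e.\ respects every constraint $\drSPR(T_i,T_j)\le C_{i,j}$ (this is what ``solves'' means here, in the same greedy sense in which ordinary neighbor-joining ``solves'' balanced minimum evolution); and second, that \constNJ\ is \emph{consistent}, meaning that whenever each $D_i$ is exactly the additive distance matrix of a tree $T_i^{*}$ and the tuple $(T_1^{*},\dots,T_k^{*})$ itself satisfies the constraint matrix $C$, \constNJ\ returns precisely $(T_1^{*},\dots,T_k^{*})$. Feasibility should fall out of the construction of the algorithm: an agglomeration is performed only when it keeps the maximum agreement partition of every pair within the budget dictated by $C$, so one just has to verify that the number of blocks of the relevant agreement partition is a genuine bound on (or exactly) the rSPR distance of the partially built trees and that this remains so after each step. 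The real content is consistency, which I would establish by induction on the number of agglomeration steps, running parallel to the classical proof that neighbor-joining recovers the tree of an additive matrix.

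For the induction I would carry the invariant that, after any number of steps, the reduced matrices $D_1',\dots,D_k'$ over the current (partially contracted) leaf set are still additive and are realized by the trees obtained from $T_1^{*},\dots,T_k^{*}$ by contracting the already-agglomerated subtrees, and that these reduced trees still satisfy $C$. The last point requires a monotonicity lemma — contracting a common pendant subtree (equivalently, a block of the current agreement partition) cannot increase any pairwise rSPR distance — which I expect to be available from the earlier material on maximum agreement partitions, being the analogue for agreement partitions of the standard fact about agreement forests. The inductive step then has two halves. \emph{Feasibility of the true move:} the agglomeration that corresponds to contracting a cherry of each $T_i^{*}$ must survive the agreement-partition check, because the partition induced by the true trees is itself a certificate that the constraints hold, so the bookkeeping \constNJ\ maintains cannot forbid that move — here one must be careful that the check is applied simultaneously to all $\binom{k}{2}$ pairs and that the pairwise checks do not conflict. \emph{Optimality of the true move:} among the moves that pass the feasibility check, the greedy criterion of \constNJ\ (the tree length $\tl(\cdot,D_i)$ summed over $i$, generalizing the $Q$-criterion of neighbor-joining) is uniquely minimized by the true move; this is where additivity does the work, since for each $i$ separately the classical argument shows the neighbor-joining criterion is strictly optimized at the cherries of the realizing tree and the distance-update rule returns the additive matrix of the contracted tree, and summing strict inequalities over $i$ keeps them strict.

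Iterating the inductive step down to trivial (star) trees and reading off the agglomeration history then reconstructs $(T_1^{*},\dots,T_k^{*})$; combined with the feasibility part this gives the theorem, and statistical consistency in the usual sense follows by the standard ``safety radius'' perturbation argument, since consistent distance estimators converge to the additive $D_i$ and the combinatorial choices \constNJ\ makes are locally constant near an additive tuple. I expect the main obstacle to be exactly the coupling that the agreement-partition machinery introduces: one must show that the single data structure \constNJ\ carries is at once (i) a valid certificate that the current trees meet \emph{every} pairwise constraint, (ii) coarse enough that the correct cherry-agglomerations are never blocked, and (iii) updated correctly under agglomeration so that the invariant propagates — in effect, that ``maximum agreement partition'' is precisely the right refinement of the agreement-forest notion to make a simultaneous, greedy, multi-tree neighbor-joining work. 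Verifying the monotonicity of rSPR distance under the particular contractions the algorithm performs, and that an agglomeration can never leave the process in a state that is infeasible yet not fully resolved, are the places where I would expect genuine care rather than routine bookkeeping.
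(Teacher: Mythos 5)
Your outline is, in essence, the paper's argument: consistency is reduced to (a) classical NJ consistency applied to each additive $D_i$, (b) the fact (Desper--Gascuel) that the coalescence most decreasing total tree length is a neighbor-joining step for the forest in which it occurs, and (c) a monotonicity statement guaranteeing that the constraint check never blocks the true NJ path --- proved by induction over coalescence steps, exactly as you propose; and the ``feasibility'' half you isolate is indeed just Theorem~\ref{thm:minEquiv} (for resolved trees $m$ equals $\drSPR$, and instances violating $C$ are discarded). The one concrete mismatch is your bookkeeping. \constNJ\ does not agglomerate a cherry of every tree in lockstep: each step coalesces a single pair of subtrees in a single forest $F_i$, all forests remain on the full taxon set $X$, and the algorithm carries a whole tensor of instances indexed by agreement profile. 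Consequently your invariant --- reduced matrices over ``the current (partially contracted) leaf set'' realized by contracted copies of the $T_i^{*}$ which ``still satisfy $C$'' --- is not well-posed as stated: at intermediate stages different trees have contracted different subtrees, so the contracted trees live on different leaf sets and their pairwise $\drSPR$ is not defined; this coupling is precisely what the maximum-agreement-partition value $m$ on the full taxon set is built to handle. The paper's replacement for your ``monotonicity lemma'' is Proposition~\ref{prop:coalIncr}: $m$ is non-decreasing under coalescence, so along the per-tree NJ paths $F_{1,i}\coalto\cdots\coalto T_i$ one has $m(F_{a,i},F_{b,j})\le m(T_i,T_j)=\drSPR(T_i,T_j)\le C_{i,j}$, i.e.\ the true trees certify every intermediate state. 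Likewise your ``sum strict inequalities over $i$'' step is unnecessary and does not match the step structure: since total tree length is a sum and a coalescence touches only one summand, the greediest coalescence is an NJ step for its own forest, and per-matrix NJ consistency (invoked as a black box, with no strictness or safety-radius argument) finishes the induction. With those two repairs --- which are exactly the points you flagged as needing care --- your plan coincides with the paper's proof.
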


For \constNJ\ we proceed in a manner analogous to that for neighbor-joining. 
The neighbor-joining algorithm starts with all taxa connected to a central node, then at every stage, chooses the ``coalescence'' (in other papers, ``amalgamation) of trees which most decreases the value of the total tree length.
We mimic this philosophy by evaluating coalescences based on how they affect the total tree length. 
However, in the end we must come up with a collection of trees $T_1,\dots,T_k$ that satisfy the prescribed rSPR constraints.
This raises the question of how one might bound the rSPR distance of the eventual trees ``ahead of time,'' i.e. before the termination of the coalescence steps. 
For instance, if in the developing trees one has the subtrees $(a,b)$ for the first distance matrix, and $(a,c)$ for the second distance matrix, it is clear that the resulting trees must have rSPR distance at least one between trees $T_1$ and $T_2$. 

The question of how to bound eventual rSPR distance is solved by Theorem~\ref{thm:minEquiv}. 
Specifically, we generalize $m$, the size of the maximum agreement forest \citep{bordewichSempleSPR04} to these partially coalesced trees, which forms a sharp bound.
In short, the $m$ value for a pair of partially coalesced trees $T$ and $S$ is the minimum rSPR distance possible among trees resulting from coalescences of $T$ and $S$;
thus once a pair of partially coalesced trees achieves an $m$ value above the corresponding constraint, we can throw that pair out, as the eventual resolved trees will never satisfy the constraints.

\begin{figure}
  \hspace{-1cm}
  \arxiv{\includegraphics[width=5.5in]{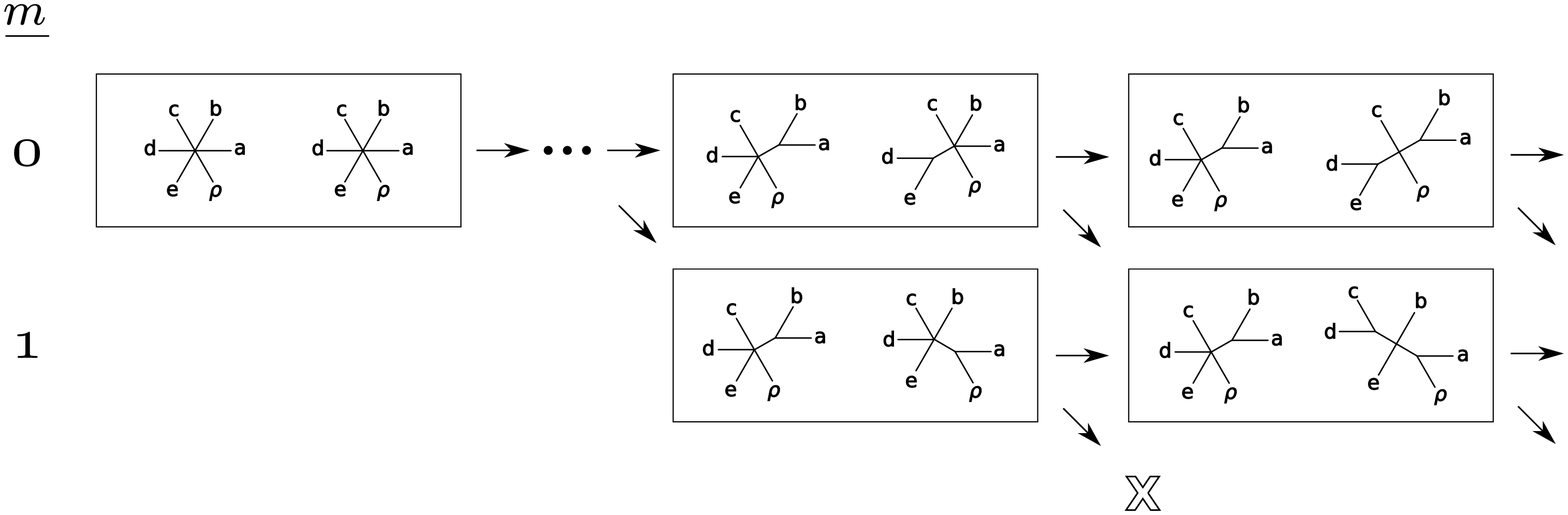}}
  \caption{
  Schematic diagram of the \constNJ\ algorithm. As described in the text, at every stage we attempt to find the optimal pair of partially coalesced trees which could eventually be at most some fixed number of rSPR moves apart. As shown in Theorem~\ref{thm:minEquiv}, the $m$ value for a pair of partially coalesced trees forms a sharp lower bound for the eventual rSPR distance between those trees. Therefore pairs of partially coalesced trees which have $m$ value exceeding the constraint on rSPR distance can be thrown out, as shown by the X.
  }
  \label{fig:algIdea}
\end{figure}

Using this $m$ we construct our greedy algorithm, as shown in Figure~\ref{fig:algIdea}.
Say that we only have two trees, and that we want to find the minimal-total-length pair of trees which are only one rSPR move apart.
At every stage, we attempt to find the best pair of partially coalesced trees with $m$ values zero and one.
We start with two star trees; $m$ applied to this pair is zero.
The first-step coalescence must also lead to a pair of trees which have $m$ value zero, as one of the trees is still completely unresolved.
Say the optimal, in terms of total tree length, second step NJ-type coalescence leads to a pair of trees which have $m$ value one (indicated by the first diagonal arrow in Figure~\ref{fig:algIdea}). 
Then we go down the list of second-step coalescences for the trees, and find the best one which does not increase the $m$ value at all (indicated by a horizontal arrow in Figure~\ref{fig:algIdea}).
Next we repeat the process for each of the trees from the previous stage, saving the best pair of trees which have $m$ values zero and one.
In the end, we will have the best pair of trees which have rSPR distances zero and one which were achievable via a series of greedy steps. 
Although not guaranteed to be the optimal pair of trees, the algorithm is consistent.

\noarxiv{\newpage}
\section{Technical preliminaries}

In this section we review some definitions and clarify our notion of optimality.
As stated in the introduction, we will always assume that an outgroup taxon has been
chosen, and will label it $\out$. Thus 
\textbf{we always assume that} $\out$ \textbf{is contained in any
taxon set} $X$.
We will use the following definitions. For the
purpose of this paper, a \emph{tree} on a finite taxon set $X$ will be
a rooted binary phylogenetic $X$-tree. 
A \emph{forest} on taxon set
$X$ will be a collection of trees on disjoint taxon sets such that the
union of the taxon sets is $X$. 
We will sometimes consider a
tree on $X$ to be a forest with a single tree. 
An \emph{unrooted tree} on a finite taxon set $X$ will be an unrooted
phylogenetic $X$-tree (note that unrooted trees will be allowed to have
multifurcating nodes.) $\leaves(R)$, $\edges(R)$, and $\vertices(R)$
will denote the leaves, edges, and vertices of a tree, unrooted tree,
or forest $R$. 

Although $\out$ represents the true rooting of the phylogenetic tree, \textbf{we will not always assume that our trees or forests are rooted at} $\out$. 
We must do so because the NJ-type coalescences will not in general root the tree at the edge leading to $\rho$. 
Therefore, we must allow alternative rootings, but at the same time keep in mind that the rSPR distance between the trees must be calculated with respect to the edge leading to $\rho$. 
Thus we use the following definition of rSPR on an unrooted tree: given an unrooted tree $U$ on a taxon set $X \ni \out$, a single SPR move first cuts some edge of the tree except for that leading to $\out$, resulting two rooted trees $R$ and $S$. 
Say $\rho \in \leaves(R)$.
Suppress the degree two root node of $R$, and attach $S$ to some edge of the resulting unrooted tree by inserting a degree two node onto the chosen edge, then connecting the root of $S$ to that new node. 
This definition is the same as that of \citet{bordewichSempleSPR04} when considering trees rooted at the edge leading to $\rho$.

As with any distance function defined implicitly in terms of a graph, the minimum number of rSPR moves required to transform one tree $T$ into another $S$ is a metric; we define $\drSPR(T,S)$ to be this number.

\subsection{Tree length and the balanced minimum evolution criterion}

As reviewed by \citet{gascuelSteelNJRevealed06}, phylogenetics researchers now understand the optimality function of the neighbor-joining algorithm \citep{saitouNeiNJ87}.
Let $\fampath(i,j)$ denote the path from $i$ to $j$ in the unrooted tree $T$,
and define the \emph{weight} of a path from leaf $i$ to leaf $j$ as
\[
 w(i,j) = \prod_{v \in \fampath(i,j)} \frac{1}{\deg(v)-1}.
\]
Then the ``length'' of an $n$ taxon tree $T$ with respect to an $n \times n$
distance matrix $D$ is \citep{sempleSteelCyclic04}:
\begin{equation}
  \tl(T,D) = \sum_{i,j} w(i,j) D_{i,j}.
  \label{eq:treeLength}
\end{equation}
The name ``tree length'' comes from the fact that if $D$ is a distance matrix
derived from some assignment of branch lengths to the edges of $T$, then $\tl$
will be the total length of all of the edges. However, the name may be somewhat
confusing initially, because $\tl$ need not be defined as sum of the branch
lengths of any specific tree.

The tree $T$ which minimizes $\tl(T,D)$ for some distance matrix $D$ is known as the \emph{balanced minimum evolution} (BME) tree for the distance matrix $D$. 
The BME criterion is consistent \citep{desperGascuelBME04}, and neighbor-joining is a consistent tree-building heuristic which greedily minimizes total tree length \citep{desperGascuelME05} 
As described in Problem~\ref{prob:main}, \constNJ\ attempts to minimize 
\begin{equation}
  \sum_{i=1}^k \tl(T_i, D_i)
  \label{eq:totTreeLength}
\end{equation}
while enforcing pairwise constraints on the rSPR distance between pairs of trees. 
When $k=1$ \constNJ\ is simply neighbor joining, while for $k>1$ \constNJ\ is a strict generalization of NJ.

\noarxiv{\newpage}
\section{Rooted SPR and maximum agreement partitions}
\label{sec:rSPR-MAP}

This section describes the primary technical content of this paper. 
As described in the introduction, we would like to proceed via coalescences in a manner similar to neighbor-joining, while ensuring that the eventual rSPR distance between the trees is not too large.
In order to assure adherence to the rSPR criterion, we develop the notion of maximum-agreement partition, which generalizes the notion of maximum agreement forest from \citet{bordewichSempleSPR04}. 
As shown in Theorem~\ref{thm:minEquiv}, maximum agreement partitions and the associated $m$ value allow us to bound the rSPR distance between the two partially resolved trees ``in advance.''


\subsection{Compatibility and coalescence}
We will use the following definitions. 
A \emph{split} on a taxon set $X$ is a bipartition of $X$. Because the set $X$
will be clear, we will often abuse notation by identifying $A \subseteq X$ with the 
partition $A | (X \setminus A)$.
Furthermore, because we have a special element $\out$,
we can distinguish between the two sides of a split; the
side not containing $\out$ will be called the \emph{rsplit} (short for
rooted split) of the split. 
It is clearly equivalent to describe a given partition in terms of a split or an rsplit, and we will use the two descriptions interchangeably.

Note that the neighbor-joining algorithm is typically thought of as proceeding by coalescing internal nodes of an unresolved phylogenetic tree (see Figure~\ref{fig:algIdea}); however for our purposes it will sometimes be easier to consider the forest obtained by deleting the central node and the associated edges.
The opposite construction will be called ``starification.''
\begin{defn}
  Given a forest $F$, define the \emph{starification} $\starif(F)$ of $F$ 
  as the following unrooted tree.
  If $F$ has one tree, then suppress the degree two root node of $F$.
  If $F$ has two trees, then join their root nodes by an edge. 
  If $F$ has three or more trees, join all of the root nodes of trees of $F$ to a
  single node. The new introduced node will be called the \emph{star
  node}.
\end{defn}
\noindent
We will identify any one, two, or three tree forest $F$ with its starification, in which case there is no designated star node.

\begin{defn}
  Given a tree $T$ which is part of a forest $F$ on a taxon set $X$,
  define the \emph{edge splits} $\ersplits(T)$ to be $\leaves(T) \sep
  [\leaves(T)]^c$ along with the set of splits on $X$ induced by the
  edges of $T$. We define $\ersplits(F)$ to be the union of the edge splits
  of $T$ across all trees $T$ in $F$.
\end{defn}

For example, the rsplits $\{3\}$ and $\{2,3,4\}$ are both edge rsplits of the forest $((1,\out),2); (3,4)$.

\begin{defn}
  Given a forest $F$ on a taxon set $X$, $A$ is a
  \emph{separating split} of $F$ if $A$ is the union of taxon sets
  for a collection of at least two trees of $F$. The set of separating splits of
  $F$ will be denoted $\srsplits(F)$. 
\end{defn}
\noindent
Given a forest $F$ we will write $\rsplits(F)$ for $\ersplits(F) \cup \srsplits(F)$.
This will be the set of splits used to make agreement partitions as described below.

\begin{defn}
  Two rsplits $A$ and $B$ will be called \emph{compatible} if either
  $A \cap B = \emptyset$, $A \subseteq B$, or $B \subseteq A$. 
\end{defn}
Because $A$ and $B$ are the sides of the splits which
do not contain $\out$, this is the same as the usual criterion for
split compatibility \citep{sempleSteelPhylogenetics03}. Therefore we
have the following well known theorem. 

\begin{thm}[Buneman, 1971]
  \label{thm:splitsEquiv}
  A collection of splits $M$ on a taxon set $X$ is pairwise compatible iff there exists an unrooted tree $T$ on taxa $X$ such that $M$ is a subset of $\ersplits(T)$. 
  There is a one-to-one correspondence between compatible sets of splits on $X$ and minimally-resolved unrooted trees on $X$.
\end{thm}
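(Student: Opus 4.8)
The result is the classical Splits-Equivalence Theorem, which I would prove by handling the two implications separately and then upgrading the ``iff'' to a bijection. For the easy direction, suppose $M\subseteq\ersplits(T)$ for some unrooted $X$-tree $T$. Trivial splits — a pendant-edge split $\{x\}\sep(X\setminus\{x\})$ or the improper split $\leaves(T)\sep[\leaves(T)]^c$ — are compatible with every split, so it is enough to show that any two splits $A$ and $B$ coming from internal edges $e$ and $f$ of $T$ are compatible. Deleting $e$ from $T$ yields two components realizing $A$ and $A^c$, and deleting $f$ as well yields exactly three components (an edge of a forest is a bridge, so removing it raises the component count by one). Each of the four intersections $A\cap B$, $A\cap B^c$, $A^c\cap B$, $A^c\cap B^c$ is a union of these three components, so at least one is empty; that is precisely the compatibility condition $A\cap B=\emptyset$ or $A\subseteq B$ or $B\subseteq A$.

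For the converse I would argue by induction on the number of nontrivial splits in $M$. First adjoin all trivial splits to $M$; this preserves pairwise compatibility and changes nothing about which trees realize $M$, since every $X$-tree has all of them. The base case is the star tree, whose edge splits are precisely the trivial ones. For the inductive step, choose a nontrivial $A\in M$ and let $T'$ be a tree realizing $M\setminus\{A\}$. The heart of the argument is the lemma that $T'$ has a \emph{unique} vertex $v$ all of whose incident branches (components of $T'-v$) lie entirely inside $A$ or entirely inside $A^c$: walking from any leaf and always crossing the incident edge whose far side meets both $A$ and $A^c$, compatibility of $A$ with that edge's split forces a unique next move, and the walk must terminate at such a $v$. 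One then splits $v$ into two adjacent vertices $v_A,v_{A^c}$ joined by a new edge, reattaching each former neighbour of $v$ according to the side its branch lies on. The new edge induces exactly $A\sep A^c$ (both sides nonempty because $A$ is nontrivial), every old edge still induces its old split, and suppressing any degree-two vertex that may appear keeps the object a legitimate unrooted $X$-tree; hence it realizes $M$. Stopping once every split of $M$ has been inserted and adding no further edges gives a tree $T$ with $\ersplits(T)=M$ and exactly one split per edge.

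For the bijection, call an unrooted $X$-tree minimally resolved when it has no degree-two vertices, so that distinct edges carry distinct splits. The first part shows $T\mapsto\ersplits(T)$ lands in the compatible split collections, and the construction just given shows it is onto (the tree built from $M$, with no superfluous edges, is minimally resolved with split set exactly $M$). For injectivity I would invoke the standard fact that an $X$-tree is determined up to isomorphism by its split system — equivalently, running the refinement construction on $\ersplits(T)$ in a fixed order rebuilds $T$ — so two minimally resolved trees with equal edge-split sets are isomorphic as $X$-trees. Together these yield the asserted one-to-one correspondence. The only genuinely load-bearing step is the unique-splittable-vertex lemma together with the check that splitting $v$ neither destroys an existing split nor manufactures an unwanted one; the rest is bookkeeping, and this is also the step where pairwise compatibility — rather than mere consistency of individual splits with the tree — is actually used.
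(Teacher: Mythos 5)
The paper does not prove this statement at all: it is quoted as Buneman's classical Splits--Equivalence Theorem, with the proof deferred to the literature (Buneman 1971; see also Semple and Steel's book, which the paper cites for the compatibility criterion). So there is no in-paper argument to compare against. Judged on its own, your proof is the standard one and is essentially correct: the easy direction via the three-components/four-intersections count after deleting two edges, the converse by induction with the unique ``splittable'' vertex $v$ whose branches each lie wholly inside $A$ or $A^c$, and the construction that splits $v$ along the new edge realizing $A\sep A^c$ while preserving all old edge splits.

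Three small points would tighten it to match the paper's conventions and to make the bijection claim self-contained. First, the paper's compatibility condition is stated for rsplits ($A\cap B=\emptyset$, $A\subseteq B$, or $B\subseteq A$); your ``at least one of the four intersections is empty'' yields exactly this because $\out$ always lies in $A^c\cap B^c$, so that intersection is never the empty one --- worth a sentence, since otherwise the two criteria are not literally the same. Second, ``minimally resolved'' should not be defined as ``no degree-two vertices'': the paper's unrooted trees have none to begin with (they are phylogenetic $X$-trees, possibly multifurcating). What is meant is that $T$ carries no edges beyond those forced by $M$, i.e.\ $\ersplits(T)$ is exactly $M$ together with the trivial splits; this is in fact the notion you use when you say the constructed tree has ``no superfluous edges,'' so only the stated definition needs fixing. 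Third, for injectivity you invoke ``an $X$-tree is determined by its split system,'' which is precisely the uniqueness half of the theorem being proved, so it cannot be cited as an external fact; your parenthetical remedy (rerun the insertion construction on $\ersplits(T)$ and check it rebuilds $T$) is the right idea, but it is the one genuinely load-bearing step left unargued --- one should show, by induction and using the uniqueness of the splittable vertex, that after inserting any subset of the nontrivial splits the intermediate tree is the corresponding edge-contraction of $T$, whence the full insertion returns $T$ itself.
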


\begin{defn}
  Two forests $F$ and $G$ on taxon set $X$ are \emph{compatible} if
  $\ersplits(F)$ and $\ersplits(G)$ are pairwise compatible.
\end{defn}

\begin{defn}
  The \emph{join} $T \join S$ of two trees $T$ and $S$ on disjoint taxon sets
  is the tree obtained by joining the root nodes of $T$ and $S$ to a
  new root node. The \emph{coalescence} of $T$ and $S$ in the forest
  $F$ is the forest $\{T \join S\} \cup (F \setminus \{T, S\})\}$.
  \label{defn:joinCoal}
\end{defn}
Note that the operation of coalescence gives a partial order on
the set of forests on a given taxon set. Namely, we write $F \coalof
G$ if $F$ is a coalescence of $G$. Clearly, trees are the maximal
elements in this partial order.

\begin{defn}
  A tree $S$ is a \emph{subtree} of an unrooted tree $U$ if $S$ is one component of the disconnected graph obtained by cutting an edge of $U$. 
  A tree $S$ is a \emph{subtree} of a rooted tree $T$ if $S$ is a component of the disconnected graph obtained by cutting an edge of $T$, and $S$ does not include the root of $T$. 
  \label{defn:subtree}
\end{defn}
We emphasize that the subtree definition is different than than that of an
\emph{induced subtree}, which is as follows. The existence of induced subtrees is
guaranteed by Theorem~\ref{thm:splitsEquiv} or its rooted equivalent.

\begin{defn}
  Given a tree $T$ and $Y \subseteq \leaves(T)$, $T\restr{Y}$ is the
  (rooted or unrooted) tree on taxa $Y$ with rsplits $\{A \cap Y: A \in \ersplits(T)\}$.
\end{defn}

There is also an analogous definition for forests.
\begin{defn}
  Given a forest $F$ and $Y \subseteq \leaves(F)$, $F\restr{Y}$ is 
  \[
  \{ T \restr{Y} : T \in F \hbox{ and } L(T) \cap Y \neq \emptyset \}.
  \]
\end{defn}

\begin{prop}
  If two forests $F$ and $G$ on a taxon set $X$ are compatible, and
  $F$ has more than one tree, then there exists $H \strictcoalof F$
  such that $H$ is compatible with $G$.
  \label{prop:compatCoal}
\end{prop}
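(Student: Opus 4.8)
The plan is to show that if $F$ has at least two trees, then among all the coalescences $H \strictcoalof F$ obtained by joining two trees of $F$, at least one is still compatible with $G$. Since $F$ is compatible with $G$, every edge split of $F$ is compatible with every edge split of $G$; so what we must check is that the \emph{new} split introduced by a coalescence — the split $\leaves(T) \cup \leaves(S)$ for the joined trees $T, S$ — remains compatible with all of $\ersplits(G)$. (All the old edge splits of $F$ persist as edge splits of $H$, and the only genuinely new edge split is $\leaves(T)\cup\leaves(S)$, so compatibility with $G$ reduces to a statement about this one new rsplit.)

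First I would set up the right object to look at. Consider the collection $\mathcal{A}$ of rsplits $\leaves(T)$ as $T$ ranges over the trees of $F$; these are pairwise disjoint (as subsets of $X$ not containing $\out$), and we must find two of them whose union is still compatible with every rsplit in $\ersplits(G)$. The key observation is that compatibility of a disjoint union $\leaves(T)\cup\leaves(S)$ with an rsplit $B \in \ersplits(G)$ can only fail in one way: it fails iff $B$ properly overlaps the union, i.e. $B$ meets it but neither contains it nor is disjoint from it. Because $\leaves(T)$ and $\leaves(S)$ are each already compatible with $B$, each of them individually is either contained in $B$, disjoint from $B$, or contains $B$. A short case analysis on these possibilities (for $T$ and for $S$ simultaneously) shows that $\leaves(T)\cup\leaves(S)$ can fail to be compatible with $B$ only when one of the two leaf-sets is contained in $B$ and the other is disjoint from $B$ (and $B$ is not wholly inside the union). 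So the obstruction to coalescing $T$ with $S$ is that some rsplit $B$ of $G$ "separates" $\leaves(T)$ from $\leaves(S)$ in this sense.

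The main step, then, is to argue that we cannot be blocked for \emph{every} pair. Here I would exploit the tree structure of $G$ via Buneman's theorem (Theorem~\ref{thm:splitsEquiv}): the rsplits in $\ersplits(G)$ form a compatible system, hence correspond to an unrooted tree, and two of the leaf-sets $\leaves(T), \leaves(S)$ are "unseparated" by $G$ precisely when they sit in the same component after some appropriate localization — concretely, if we pick a tree of $G$ realizing $\ersplits(G)$ and look at where the sets of $\mathcal{A}$ land, two sets lying on a common "smallest piece" can be coalesced. Since $G$ is a finite tree and $|\mathcal{A}| \ge 2$, some minimal such piece contains at least two of the sets in $\mathcal{A}$ (a leaf-most subtree of $G$ of the nested structure induced by $\mathcal{A}$), giving the desired pair $T, S$; coalescing them produces $H \strictcoalof F$ compatible with $G$.

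The part I expect to be the real obstacle is making the "minimal piece containing two sets" argument airtight when the sets of $\mathcal{A}$ do not exhaust $X$ and when $\ersplits(G)$ is not fully resolved. One has to be careful that restricting attention to a subtree of $G$ does not destroy compatibility with some rsplit of $G$ lying "outside" that subtree; the saving grace is that any such outside rsplit is disjoint from the union (or contains it), so compatibility is automatic. I would handle this by choosing, among all rsplits $B$ of $G$ that contain at least two members of $\mathcal{A}$, one that is minimal under inclusion (such a $B$ exists because $X$ itself — or rather the ambient side — contains all of $\mathcal{A}$, and the containing rsplits are finite and, being from a compatible system, linearly ordered along the relevant path), and then showing that the two members of $\mathcal{A}$ inside this minimal $B$ are not separated by any rsplit of $G$, so their union is compatible with all of $\ersplits(G)$. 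Finally I would note that the union $\leaves(T)\cup\leaves(S)$ is also automatically compatible with $\srsplits(H)$ and with the trivial split, so $H = \{T\join S\}\cup(F\setminus\{T,S\})$ is a forest with $H \strictcoalof F$ and $\ersplits(H)$ pairwise compatible with $\ersplits(G)$, i.e. $H$ is compatible with $G$, completing the proof.
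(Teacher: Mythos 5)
Your core argument is sound but takes a genuinely different route from the paper. The paper uses compatibility plus Buneman's theorem (Theorem~\ref{thm:splitsEquiv}) to build a single unrooted tree $U$ whose split set is $\ersplits(F)\cup\ersplits(G)$, resolves it arbitrarily, and then finds a pair of trees $S',S''$ of $F$ hanging off one end of the longest path of $U$ that avoids the subtrees of $F$; the union of their leaf sets is then itself a split of $U$, hence compatible with $\ersplits(G)$. You never construct $U$: you characterize the unique failure mode (an rsplit $B$ of $G$ properly containing one of the two leaf sets and disjoint from the other) and then take $B\in\ersplits(G)$ minimal under inclusion among rsplits containing two whole leaf sets of trees of $F$; pairwise compatibility of $\ersplits(G)$, together with compatibility of $G$ with the leaf-set splits of $F$, forces any would-be separator of the two members inside $B$ to be a strictly smaller rsplit again containing two members, contradicting minimality. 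This order-theoretic argument is correct and arguably more self-contained than the paper's (no arbitrary resolution of $U$, no longest-path bookkeeping).

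Two points do need repair, both fixable. First, exactly one tree of $F$ contains $\out$, so its leaf set is not an rsplit and is never contained in any rsplit of $G$; for that tree the trichotomy you invoke must be replaced by the four-case split compatibility (the extra possibility being $[\leaves(T)]^c\subseteq B$). The failure-mode characterization survives this correction, and the two members inside your minimal $B$ automatically avoid $\out$, so the main step is unaffected. Second, your existence claim for $B$ (``$X$ itself --- or rather the ambient side --- contains all of $\mathcal{A}$'') is not right as stated: $X$ is not an rsplit of $G$, and the outgroup member of $\mathcal{A}$ lies in no rsplit at all. Existence does hold once $F$ has at least three trees, because the trivial split $\{\out\}\sep X\setminus\{\out\}$ lies in $\ersplits(G)$, so $X\setminus\{\out\}$ is an rsplit of $G$ containing every non-outgroup member; and when $F$ has exactly two trees the proposition is immediate since the only new split created by the unique coalescence is trivial (the paper likewise dismisses the small cases as trivial). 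With these patches your proof stands.
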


\begin{proof}
  If $F$ has two or three trees, the proposition is trivial.
 Otherwise, let $U$ be the tree with split set equal to the union of
 $\ersplits(F)$ and $\ersplits(G)$. If $U$ is not resolved (i.e. if
 there exists an internal node of degree greater than three) then take an 
 arbitrary resolution. As all of the trees $T$ of $F$ are
 resolved, each $T$ sits as a subtree of $U$; let $J$ be the union of
 the nodes of the $T \in F$ (considered as nodes of $U$). 
 Let $\fampath$ denote the longest path in $U$ which
 does not contact any of the nodes in $J$. Because $F$ has at least
 four trees, $\fampath$ will be nontrivial. Pick one end of this path,
 which must be connected to a pair of trees $S', S''$ of $F$. Let $K =
 \leaves(S') \cup \leaves(S'')$. As the split $K \sep K^c$ is already
 a split of $U$, we know that it is compatible with $\ersplits(F)$ and
 $\ersplits(G)$, and thus that $\rsplits(S' \join S'')$ is compatible 
 with $\ersplits(G)$. Let $H$ be the coalescence of $S'$ and $S''$ in $F$. 
\end{proof}

\subsection{Maximum agreement partitions}
In this section we introduce the notion of maximum agreement partition
(MAP), which generalizes the idea of maximum agreement forests. 
Maximum agreement forests were first introduced by 
\citet{heinEaComplexity96}, and further refined by \citet{bordewichSempleSPR04}. 
In broad terms, given two forests $F$ and $G$ on a taxon set $X$, we
will be interested in considering partitions $P$ which are obtainable
from $F$ and $G$ independently by ``combining'' edge splits and
separating splits of those forests, in the same way that edge cuts
are combined when making maximum agreement forests. The
appropriate notion of ``combining'' splits is the infimum, which we now describe. 

The set of partitions on a given finite set $Y$ form a partial order,
such that a partition $P_1 \leq P_2$ if $P_1$ is a refinement of
$P_2$. In fact, the set of partitions is a \emph{complete lattice},
meaning that any set of partitions on $Y$ has a supremum and an
infimum. For a collection of partitions $M$, we will use $\infimum(M)$
to denote their infimum. 

Thus, as described below, a necessary condition for $P$ to be an
agreement partition for two forests $F$ and $G$ is that $P$ can be
expressed as $\infimum(M)$ and $\infimum(N)$ for $M \subseteq
\rsplits(F)$ and $N \subseteq \rsplits(G)$. It will now be
useful to connect that definition to one in terms of convexity of
characters \citep{sempleSteelPhylogenetics03}. 

\begin{defn}
  Given a partition $P$ on some set $K$, define $P \restr{J}$ for some
  $J \subseteq K$ to be the partition $\{Y \cap J : Y \in P\}$.
\end{defn}

The following is a slight generalization of the definition of convexity
given by \citet{sempleSteelPhylogenetics03}. 
\begin{defn}
  A partition $P$ on a taxon set $X$ is convex on a
  forest $F$ on $X$ if there exists an $H \coalof F$ such that
  $P$ induces a convex character on
  $\starif(H)$, i.e. if there exists a partition $\tilde{P}$ on
  vertices $\vertices(\starif(H))$ such that
  \begin{enumerate}
    \item $P = \tilde{P} \restr{X}$.
    \item Any $\tilde{Y} \in \tilde{P}$ separates $\starif(H)$ into connected components.
  \end{enumerate}
  \label{defn:convex}
\end{defn}

The following proposition relates the notions of ``obtainable by a series of
cuts along edge or separating splits'' with the notion of character convexity. 
\begin{prop}
  A partition $P$ of a taxon set $X$ is convex on a forest $F$ iff
  there exists $M \subseteq \rsplits(F)$ such that $P = \infimum(M)$.
  \label{prop:convexInfimum}
\end{prop}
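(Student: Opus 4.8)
The plan is to reduce both directions to two facts. The first, $(\star)$, is the classical correspondence between convex characters and edge cuts on a \emph{single} tree: for a tree $U$ and a partition $P$ of $\leaves(U)$, the following are equivalent --- (i) $P$ extends to a partition of $\vertices(U)$ each of whose classes induces a connected subtree; (ii) $P$ is the partition of $\leaves(U)$ into connected components of the graph $U \setminus E'$ for some $E' \subseteq \edges(U)$; (iii) $P = \infimum(\{\sigma_e : e \in E'\})$ for some $E' \subseteq \edges(U)$, where $\sigma_e$ denotes the split of $U$ induced by $e$. This holds for arbitrary (possibly multifurcating) $U$: (ii)$\Rightarrow$(i) is trivial, (i)$\Rightarrow$(ii) follows by deleting every edge whose endpoints lie in different classes of the extension, and (ii)$\Leftrightarrow$(iii) is immediate in a tree since two leaves lie in the same component of $U \setminus E'$ iff they lie on the same side of $\sigma_e$ for every $e \in E'$ (see \citet{sempleSteelPhylogenetics03} for (i)$\Leftrightarrow$(ii)). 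The second fact, $(\dagger)$, is purely structural: if $H \coalof F$ then $\ersplits(\starif(H)) \subseteq \rsplits(F)$.

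Granting $(\star)$ and $(\dagger)$, the forward direction is short. If $P$ is convex on $F$, then by Definition~\ref{defn:convex} there is $H \coalof F$ with $P$ inducing a convex character on $U := \starif(H)$, i.e.\ (i) holds; by $(\star)$, $P = \infimum(N)$ for some $N \subseteq \ersplits(U)$; by $(\dagger)$, $N \subseteq \rsplits(F)$, so $M := N$ works.

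For the converse I would start from $P = \infimum(M)$ with $M \subseteq \rsplits(F)$ and split $M$ as $M_E \cup M_S$ with $M_E \subseteq \ersplits(F)$ and $M_S \subseteq \srsplits(F)$. Every split in $M_S$ is a union of taxon sets of trees of $F$, so any two taxa belonging to one tree of $F$ fall on the same side of every split of $M_S$; hence each block of $\infimum(M_S)$ is a union of entire trees of $F$. Let $H \coalof F$ be obtained by coalescing, inside each block of $\infimum(M_S)$, all of its trees into a single tree; call the resulting trees $S_1,\dots,S_p$, so that $\{\leaves(S_1),\dots,\leaves(S_p)\} = \infimum(M_S)$. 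Now (a) the splits of $\starif(H)$ induced by the edges meeting the star node --- or by the unique interior edge when $p=2$ --- are exactly the $\leaves(S_l)\sep\leaves(S_l)^c$, whose infimum is $\infimum(M_S)$; and (b) every edge of a tree $T \in F$ persists as an edge of $\starif(H)$ with the same induced split on $X$ (the whole-tree split $\leaves(T)\sep\leaves(T)^c$ becoming the split at the edge attaching $T$'s root within its $S_l$, or a star-node edge when $T$ is an entire block), so $M_E \subseteq \ersplits(\starif(H))$. Thus $N := M_E \cup \{\leaves(S_l)\sep\leaves(S_l)^c : 1\le l\le p\} \subseteq \ersplits(\starif(H))$ and $\infimum(N) = \infimum(M_E) \join \infimum(M_S) = \infimum(M) = P$, so $(\star)$ shows $P$ is convex on $\starif(H)$, and hence on $F$.

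The part I expect to require the most care is $(\dagger)$: one must check that every edge of $\starif(H)$ --- whether interior to an original tree of $F$, created by a coalescence join, or introduced by the starification step --- induces on $X$ either an edge split of some tree of $F$ (for interior edges, and edges cutting off a single original tree) or a separating split (for edges cutting off a union of two or more original trees). Morally this is precisely why $\srsplits(F)$ was folded into $\rsplits(F)$ in the first place. The verification is a routine case analysis, but the small and degenerate configurations --- forests with one or two trees, and the singleton outgroup tree $\{\out\}$ --- must be checked by hand, tracking the three cases in the definition of starification (suppressing the root when $H$ has one tree, joining roots by an edge when it has two, introducing a star node otherwise) in each instance.
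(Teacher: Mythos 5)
Your proof is correct and follows essentially the same route as the paper's: the forward direction reads off the edge splits of $\starif(H)$ that cross classes of $\tilde{P}$ and observes they lie in $\rsplits(F)$, and the converse coalesces $F$ according to the blocks of the infimum of the separating splits in $M$ --- exactly the two steps in the paper's (much terser) argument. Your version merely makes explicit, as two stated auxiliary facts, what the paper leaves as one-line observations, and is more careful about the degenerate starification cases.
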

\begin{proof}
  Assume $M \subseteq \rsplits(F)$ such that $P = \infimum(M)$. Note
  that $K = \inf(M \cap \srsplits(F)$ is a set of disjoint separating or root-edge
  splits for $F$; thus we can perform coalescences, making $H$, such
  that the splits from sets in $K$ are edge splits of $H$. Such an
  $H$ will satisfy the criteria of the definition.

  For the converse implication, cutting any edge $(u,v)$ of $\starif(H)$ for
  any $H \coalof F$ gives a split in $\rsplits(F)$. We then define $M$
  as the set of such splits $s_{u,v}$ such that such that $(u,v)$ is 
  an edge and $u$ and $v$ are in distinct sets of the partition $\tilde{P}$.
  By construction, $P = \infimum(M)$.
\end{proof}

The following definition generalizes the notion of agreement forest.
\begin{defn}
  We say that a partition $P$ of taxon set $X$ is an
  \emph{agreement partition} for a pair of forests $F, G$ on $X$ if
  \begin{enumerate}
    \item for every pair of rsplits $A \in \ersplits(F)$, $B \in \ersplits(G)$, and $Y
      \in P$, $A \cap Y$ is compatible with $B \cap Y$.
    \item $P$ is convex on $F$ and $G$.
  \end{enumerate}
  We say that $P$ is a \emph{maximum agreement partition} (MAP) if the
  number of sets of $P$ is less than or equal to that of any other
  agreement partition. Let $m(F,G)$ be the number of sets in the MAP
  minus one. 
  \label{def:MAP}
\end{defn}

Note that by Theorem~\ref{thm:splitsEquiv}, for two resolved unrooted trees
$U,V$ on a taxon set $X \ni \out$, the size of the maximum agreement partition
is the same as the size of the maximum agreement forest of the trees (rooted at
$\out$) in the sense of \citet{bordewichSempleSPR04}. 
Recall that the definitions of maximum agreement forest in \citet{bordewichSempleSPR04} differs from that of \citet{heinEaComplexity96} and \citet{allenSteelSPR01}.

\begin{prop}
  Assume $F$, $G$, and $H$ are forests on a taxon set $X$ such that $H
  \coalof F$, and $P$ is an agreement partition for $H$ and $G$. Then
  $P$ is also an agreement partition for $F$ and $G$.
  \label{prop:coalIncr}
\end{prop}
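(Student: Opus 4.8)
The plan is to check the two conditions of Definition~\ref{def:MAP} for the pair $(F,G)$ directly, leveraging the fact that passing from $F$ to a coalescence $H$ can only enlarge the edge splits and can only shrink $\rsplits$.

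The combinatorial heart of the argument is the following observation about a single coalescence: if $H$ is obtained from $F$ by coalescing two trees $T,S\in F$ into $T\join S$, then $\ersplits(F)\subseteq\ersplits(H)$ and $\rsplits(H)\subseteq\rsplits(F)$. For the first inclusion, every edge of $T$ (and of $S$) remains an edge of $T\join S$ inducing the same split of $X$, and the root-edge split $\leaves(T)\sep[\leaves(T)]^c$ of $T$ is precisely the split induced in $T\join S$ by the edge from the root of $T$ to the new root; likewise for $S$, while the remaining trees of $F$ are unchanged. For the second inclusion, the only edge split of $H$ that is potentially new is the one induced by the new root edge of $T\join S$, whose rsplit is $\leaves(T)\cup\leaves(S)$, and this is a union of the taxon sets of two trees of $F$, hence lies in $\srsplits(F)$; and any separating split of $H$ is a union of taxon sets of at least two trees of $H$, each of which is itself a union of taxon sets of trees of $F$, so it again lies in $\srsplits(F)$. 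Iterating along the chain of coalescences witnessing $H\coalof F$ then gives $\ersplits(F)\subseteq\ersplits(H)$ and $\rsplits(H)\subseteq\rsplits(F)$ in general.

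With this in hand the proposition is short. Condition~(i) for $(F,G)$: given $A\in\ersplits(F)$, $B\in\ersplits(G)$, and $Y\in P$, we have $A\in\ersplits(H)$, so the compatibility of $A\cap Y$ with $B\cap Y$ is simply an instance of condition~(i) for the pair $(H,G)$. Condition~(ii) for $(F,G)$: $P$ is convex on $G$ by hypothesis, and $P$ is convex on $F$ because any $H'\coalof H$ witnessing convexity of $P$ on $H$ (in the sense of Definition~\ref{defn:convex}) also satisfies $H'\coalof F$ by transitivity of $\coalof$; alternatively, write $P=\infimum(M)$ with $M\subseteq\rsplits(H)\subseteq\rsplits(F)$ and invoke Proposition~\ref{prop:convexInfimum}. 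Hence $P$ satisfies both conditions and is an agreement partition for $F$ and $G$.

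I expect the only step requiring genuine care to be the single-coalescence observation, and within it the bookkeeping that the one genuinely new split produced at each step — the new root edge — is accounted for as a separating split of the finer forest rather than being lost; everything after that is a direct appeal to the definitions and to the transitivity of the coalescence order.
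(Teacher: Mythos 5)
Your proof is correct and follows essentially the same route as the paper: condition (i) from $\ersplits(F)\subseteq\ersplits(H)$, and condition (ii) from $\rsplits(H)\subseteq\rsplits(F)$ together with Proposition~\ref{prop:convexInfimum}. You supply more detail than the paper on why the new root-edge split of a coalescence lands in $\srsplits(F)$, and your alternative transitivity argument for convexity (reusing the witness $H'\coalof H\coalof F$ directly from Definition~\ref{defn:convex}) is a valid, slightly more elementary variant.
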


\begin{proof}
  Part \pari\ of the definition is clear as $\ersplits(F) \subseteq
  \ersplits(H)$. Next we check \parii, i.e. that $P$ is convex on
  $F$. Note that $H \coalof F$ implies $\rsplits(H) \subseteq
  \rsplits(F)$, as the ``extra'' edge splits of $H$ will be separating
  splits of $F$. By Proposition~\ref{prop:convexInfimum}, there exists an $M
  \subseteq \rsplits(H)$ such that $P = \inf(M)$; by the previous sentence $M
  \subseteq \rsplits(F)$ and so by Proposition~\ref{prop:convexInfimum} again $P$
  is convex on $F$. 
\end{proof}

The following theorem is the main motivation for studying
the maximum agreement partition. Thus the proposition says
that the size of the maximum agreement partition of the two forests
$F$ and $G$ is the same as the rSPR distance in the best case. 
\begin{thm}
  The minimum of $\drSPR(U,V)$ across all unrooted trees
  $U \coalof F$ and $V \coalof G$ is equal to $m(F,G)$. 
  \label{thm:minEquiv}
\end{thm}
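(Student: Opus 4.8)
The statement has two directions, and I would prove them as two inequalities. Write $\mu = \min\{\drSPR(U,V) : U \coalof F,\ V \coalof G\}$. I want to show $\mu = m(F,G)$.

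\textbf{Direction $\mu \geq m(F,G)$ (MAP bounds rSPR distance from below).} Fix a pair of trees $U \coalof F$, $V \coalof G$ realizing the minimum, and consider a maximum agreement forest $\mathcal{F}$ for $U$ and $V$ (rooted at $\out$) in the sense of Bordewich--Semple, so that $|\mathcal{F}| - 1 = \drSPR(U,V) = \mu$ by their theorem. The plan is to manufacture, from the edge cuts defining $\mathcal{F}$, a partition $P$ of $X$ with $|\mathcal{F}|$ blocks and to verify that $P$ is an agreement partition for $F$ and $G$ (not just for $U,V$). The blocks of $P$ are the leaf-sets of the components of $\mathcal{F}$. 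Condition \pari\ of Definition~\ref{def:MAP} holds because, restricted to any block $Y$, both $U$ and $V$ induce the same tree (that is the defining property of an agreement forest), so $A \cap Y$ and $B \cap Y$ are compatible for any $A \in \ersplits(U) \supseteq \ersplits(F)$, $B \in \ersplits(V) \supseteq \ersplits(G)$. For condition \parii\ I need $P$ convex on $F$ (and $G$); since $P$ is obtained by cutting edges of $U$, it is obtained as $\inf$ of a subset of $\ersplits(U) \subseteq \rsplits(U)$, hence convex on $U$, and then Proposition~\ref{prop:convexInfimum} together with $\rsplits(U) \subseteq \rsplits(F)$ (because $U \coalof F$, as in the proof of Proposition~\ref{prop:coalIncr}) gives convexity on $F$; symmetrically on $G$. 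Thus $P$ is an agreement partition with $|\mathcal{F}| = \mu + 1$ blocks, so $m(F,G) \leq \mu$.

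\textbf{Direction $\mu \leq m(F,G)$ (a MAP can be realized).} Let $P$ be a maximum agreement partition for $F$ and $G$, with $|P| = m(F,G) + 1$ blocks. Using convexity of $P$ on $F$, pick $H_F \coalof F$ and a partition $\tilde P$ of $\vertices(\starif(H_F))$ witnessing convexity (Definition~\ref{defn:convex}); do the same for $G$ to get $H_G$. Now I want to resolve $H_F$ into an actual unrooted tree $U \coalof F$, and $H_G$ into $V \coalof G$, so that $U$ and $V$ have an agreement forest with exactly $|P|$ components --- the components being the resolutions of $\starif(H_F)\restr{Y}$ and $\starif(H_G)\restr{Y}$ over $Y \in P$. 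Here I use condition \pari\ of the agreement-partition definition: for each block $Y$, the edge splits of $F$ and of $G$ restricted to $Y$ are mutually compatible, so by Buneman's theorem (Theorem~\ref{thm:splitsEquiv}) there is a common refinement, a resolved tree $T_Y$ on $Y$ displaying both $\ersplits(F)\restr{Y}$ and $\ersplits(G)\restr{Y}$; choosing such $T_Y$ for every block and assembling, via coalescences, trees $U$ and $V$ sharing the forest $\{T_Y\}$ gives $\drSPR(U,V) \leq |P| - 1 = m(F,G)$. A small amount of care is needed with the one special block containing $\out$ and with making sure the resolutions are genuinely reachable by coalescence from $F$ and $G$ respectively; Proposition~\ref{prop:compatCoal} (iterated) is the tool that lets me coalesce a compatible forest all the way up to a resolved tree.

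\textbf{Where the difficulty lies.} The low-bound direction is essentially bookkeeping once one invokes the Bordewich--Semple characterization of $\drSPR$ by agreement forests, plus the already-established Propositions~\ref{prop:convexInfimum} and~\ref{prop:coalIncr}. The real work is the realizability direction: one must convert the abstract combinatorial object (a partition convex on both forests, with mutually compatible edge splits inside each block) into an honest pair of resolved trees whose agreement-forest size is controlled, and in particular argue that the blocks of $P$ really do glue together along $\tilde P$ into a tree structure rather than overlapping in some incompatible way. Pinning down that the witnessing partitions $\tilde P_F$ on $\starif(H_F)$ and $\tilde P_G$ on $\starif(H_G)$ can be taken compatibly --- i.e. that the ``inter-block'' structure agrees enough that cutting the edges between blocks yields a single agreement forest for both $U$ and $V$ --- is the step I expect to require the most care, and I would handle it by working with the rsplit formulation throughout and leaning on Buneman's theorem block-by-block.
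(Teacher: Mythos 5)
Your first direction ($\mu \geq m(F,G)$) is sound and is essentially the paper's: the leaf-set partition of a maximum agreement forest of resolved trees $U \coalof F$, $V \coalof G$ is an agreement partition for $U,V$, and Proposition~\ref{prop:coalIncr} (whose proof you correctly echo) transfers it to $F,G$; combined with Theorem~2.1 of Bordewich--Semple this gives $m(F,G) \leq \drSPR(U,V)$.

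The second direction is where the theorem's real content lies, and there your proposal has a genuine gap rather than a proof. You take a MAP $P$, build for each block $Y$ a common refinement $T_Y$ of $\ersplits(F)\restr{Y}$ and $\ersplits(G)\restr{Y}$ via Buneman, and then assert that one can ``assemble, via coalescences, trees $U$ and $V$ sharing the forest $\{T_Y\}$.'' That assembly is exactly the hard step, and block-by-block compatibility does not deliver it: $U$ must display \emph{all} of $\ersplits(F)$ on $X$ (to be a coalescence of $F$), must satisfy $U\restr{Y}=T_Y$ for every block simultaneously (which forces $U$ to display refinements coming from $G$ inside each block), and must keep the blocks of $P$ as vertex-disjoint convex subtrees so that $\{T_Y\}$ is a genuine agreement forest of $U$ and $V$ in the Bordewich--Semple sense (including the correct handling of the component containing $\out$). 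Nothing in your sketch shows that the inter-block attachment structure can be chosen consistently with these three requirements at once; you flag this yourself (``the step I expect to require the most care'') but do not carry it out. This is precisely what the paper's Lemma~\ref{lem:commonCoal} proves, by induction on coalescences: Proposition~\ref{prop:compatCoal} applied to $F\restr{Y}$ and $G\restr{Y}$ produces a within-block coalescence compatible with the other forest, and Lemma~\ref{lem:joinRestr} lifts it to an honest coalescence of the full forest $F$ while guaranteeing (via its second statement) that no new edge splits are introduced in any other block and that convexity of $P$ is preserved. Without an argument of this kind (or some substitute for the gluing step), your upper-bound direction remains a plausible plan, not a proof.
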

\noindent The proof of this proposition will come after two lemmas.

\begin{lem}
  Given a partition $P$ convex on $F$ and $Y \in P$ such that $F \restr{Y}$ includes two distinct trees $T$ and $S$, then there exist distinct trees $\tilde{T}, \tilde{S} \in F$ such that $\left. \left( \tilde{T} \join \tilde{S} \right)\right| _{Y} = T \join S$. 
  Furthermore, for any $Z \in P$ not equal to $Y$ and any $R \in \{\tilde{T}, \tilde{S}\}$, we have either $Z \subset \leaves(R)$ or $Z \cap \leaves(R) = \emptyset$.
  \label{lem:joinRestr}
\end{lem}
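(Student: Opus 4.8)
The plan is to first pin down the trees $\tilde{T}$ and $\tilde{S}$ as the unique members of $F$ restricting to $T$ and $S$ on $Y$, deduce the displayed identity essentially by bookkeeping, and then prove the ``furthermore'' clause by a connectedness argument inside $\starif(H)$ for a coalescence $H$ witnessing convexity of $P$ on $F$.

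For the main assertion, I would argue as follows. Since $T$ is a tree of $F\restr{Y}$, by definition of the restriction there is some $\tilde{T} \in F$ with $\leaves(\tilde{T}) \cap Y \neq \emptyset$ and $\tilde{T}\restr{Y} = T$; this $\tilde{T}$ is unique, because distinct trees of the forest $F$ have disjoint leaf sets, so two of them restricting to the same $T$ would force $\leaves(T) = \emptyset$. Pick $\tilde{S}$ likewise, and note $\tilde{T} \neq \tilde{S}$ since $T \neq S$. It then remains to see that $(\tilde{T}\join\tilde{S})\restr{Y} = T \join S$, which I would get by comparing edge splits: one checks directly from the definitions of $\join$ and $\restr{}$ that $(\tilde{T}\join\tilde{S})\restr{Y} = (\tilde{T}\restr{Y}) \join (\tilde{S}\restr{Y})$ whenever both $\tilde{T}$ and $\tilde{S}$ meet $Y$, and the right-hand side is $T \join S$. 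I expect this step to be routine manipulation with the rsplit formalism and would not spell it out in detail.

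For the ``furthermore'' clause, fix $R \in \{\tilde{T},\tilde{S}\}$ and $Z \in P$ with $Z \neq Y$ and $Z \cap \leaves(R) \neq \emptyset$; the goal is $Z \subseteq \leaves(R)$. Using convexity of $P$ on $F$, fix $H \coalof F$ together with a partition $\tilde{P}$ of $\vertices(\starif(H))$ all of whose classes are connected in $\starif(H)$ and with $P = \tilde{P}\restr{X}$, and let $\tilde{Y},\tilde{Z} \in \tilde{P}$ be the classes with $\tilde{Y} \cap X = Y$ and $\tilde{Z} \cap X = Z$; since $Y \neq Z$ these classes are distinct, hence disjoint. Because $R$ is a tree of $F$ and $H$ is a coalescence of $F$, the tree $R$ sits inside $\starif(H)$ as a rooted subtree hanging off a single vertex $r_R$ (the root of $R$), so removing $r_R$ separates $\leaves(R)$ from the rest of $\starif(H)$; consequently any path in $\starif(H)$ between a leaf in $\leaves(R)$ and a leaf not in $\leaves(R)$ passes through $r_R$. (The only degenerate possibility, $\leaves(R) = X$, forces $\starif(H)$ to be a single suppressed tree and then $Z \subseteq X = \leaves(R)$ already.) Now $\leaves(T) = \leaves(\tilde{T})\cap Y$ and $\leaves(S) = \leaves(\tilde{S})\cap Y$ are nonempty and $\leaves(\tilde{T})\cap\leaves(\tilde{S}) = \emptyset$, so $Y$ contains both a leaf of $R$ and a leaf not in $\leaves(R)$; by connectedness of $\tilde{Y}$ the path joining them lies in $\tilde{Y}$, so $r_R \in \tilde{Y}$. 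If $Z$ also had a leaf outside $\leaves(R)$, the identical argument applied to $Z$ would put $r_R \in \tilde{Z}$, contradicting $\tilde{Y}\cap\tilde{Z} = \emptyset$; hence $Z \subseteq \leaves(R)$.

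The main obstacle is the middle step of the last paragraph: making precise how an arbitrary tree of $F$ embeds as a rooted subtree of $\starif(H)$ for a coalescence $H$ chosen only to witness convexity, and handling uniformly the cases where $H$ has one, two, or at least three trees (which change the local shape of $\starif(H)$ at the roots) and the case where $R$ is itself a whole tree of $H$. Once the embedding and the statement ``every path leaving $\leaves(R)$ passes through $r_R$'' are secured, the disjointness of $\tilde{Y}$ and $\tilde{Z}$ yields the contradiction immediately.
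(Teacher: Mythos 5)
Your proof is correct and follows essentially the same route as the paper's: identify $\tilde{T},\tilde{S}$ as the (necessarily distinct) trees of $F$ restricting to $T$ and $S$, observe that restriction commutes with join, and then use the connectedness of the classes $\tilde{Y},\tilde{Z}$ of the convexity witness to force the root of $R$ into both $\tilde{Y}$ and $\tilde{Z}$, contradicting their disjointness. The only cosmetic difference is that you get $\tilde{T}\neq\tilde{S}$ directly from the disjointness of leaf sets in a forest, whereas the paper derives it from connectedness of $\starif(H)\restr{Q}$; both are valid.
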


\begin{proof}
  Let $H$ and $\tilde{P}$ be as in Definition~\ref{defn:convex}. 
  Let $\tilde{Y} \in \tilde{P}$ be such that $\tilde{Y} \cap \leaves(F) = Y$.
  Let $\tilde{T}$ (resp. $\tilde{S} \in F$) be the tree such that $\tilde{T} \restr{Y} = T$ (resp. $\tilde{S} \restr{Y} = S)$.
  Let $Q = \leaves(T) \cup \leaves(S)$. 

  We now show that $\tilde{T} \neq \tilde{S}$. 
  The contrary would imply $\starif(H) \restr{Q} = \tilde{T} \restr Q.$
  Because $Q \subset Y$ and $\starif(H) \restr Y$ is connected by definition, $\tilde T \restr Q$ is connected so $T$ and $S$ would not be distinct.
  This is a contradiction.
  Thus $\leaves(\tilde T) \cap \leaves(\tilde S) = \emptyset$ so $\left. \left( \tilde{T} \join \tilde{S} \right)\right| _{Y} = \tilde T \restr Y \join \tilde S \restr Y = T \join S$.

  We now show the second statement of the lemma.
  Let $r(W)$ denote the root node of any tree $W \in F$.
  Note that $r(\tilde T)$ and $r(\tilde S)$ must be in $\tilde Y$ because $\starif(H) \restr Q$ is connected and the path between any $a \in \leaves(T)$ and $b \in \leaves(S)$ passes through $r(\tilde T)$ and $r(\tilde S)$.

  Now assume that for some $R \in \{\tilde{T}, \tilde{S}\}$ we have that some $Z \neq Y$ of $P$ intersects $\leaves(R)$ but is not contained in it.
  Take $c \in Z \cap \leaves(R)$ and $d \in Z \cap \left[ \leaves(R) \right]^c$.
  Let $\tilde Z \in P$ be such that $\tilde Z \cap L(F) = Z$.
  By the same argument as in the previous paragraph, $r(R)$ is in $\tilde Z$.
  This is a contradiction as $\tilde Y$ and $\tilde Z$ are disjoint.
\end{proof}

\begin{lem}
  Assume that $F$ and $G$ are forests on a taxon set $X$, and $P$ is an
  agreement partition for $F$ and $G$. Then there exist resolved trees $U \coalof F$
  and $V \coalof G$ such that $P$ is an agreement partition for $U$ and $V$.
  \label{lem:commonCoal}
\end{lem}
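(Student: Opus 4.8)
The plan is to upgrade an agreement partition $P$ for the pair $(F,G)$ to an agreement partition for some pair of \emph{resolved} trees $U \coalof F$, $V \coalof G$, by a repeated coalescence argument that works on $F$ and $G$ separately. The key point is that by Proposition~\ref{prop:coalIncr}, coalescing $F$ (or $G$) can only destroy agreement partitions ``above'' — i.e. if $P$ is an agreement partition for $H$ and $G$ with $H \coalof F$, then it is one for $F$ and $G$ — so we cannot simply coalesce blindly; we must coalesce \emph{within the blocks of $P$} so as to preserve convexity of $P$ and the pairwise compatibility condition \pari. So first I would fix attention on a single block $Y \in P$ and on $F$, and show that whenever $F\restr{Y}$ contains two distinct trees, we can perform a coalescence inside $F$ that (a) reduces the number of trees of $F\restr{Y}$ and (b) keeps $P$ an agreement partition for the new forest and $G$. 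This is exactly where Lemma~\ref{lem:joinRestr} does the work: it produces distinct $\tilde T, \tilde S \in F$ whose join, restricted to $Y$, is $T \join S$, and — crucially for preserving convexity — it guarantees that every other block $Z \in P$ is either wholly inside $\leaves(\tilde T)\cup\leaves(\tilde S)$ or wholly outside it. I would use that ``clean nesting'' to check that $\tilde P$ (the convexifying partition of $\starif(H)$ from Definition~\ref{defn:convex}) survives the coalescence of $\tilde T$ and $\tilde S$, so that $P$ stays convex on the coalesced forest.

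Next I would verify that condition \pari\ of Definition~\ref{def:MAP} is maintained under such a coalescence. The new tree $\tilde T \join \tilde S$ contributes, restricted to $Y$, the splits of $T \join S = (\tilde T \join \tilde S)\restr Y$; for blocks $Z \neq Y$ the contribution is, by the nesting statement of Lemma~\ref{lem:joinRestr}, either trivial or the whole-block split, both of which are automatically compatible with anything. Within $Y$ the edge rsplits of $T \join S$ are: the edge rsplits of $T$, the edge rsplits of $S$, and the new rsplit $\leaves(T)\cup\leaves(S)$ (intersected with $Y$). The first two were already present (from $\tilde T$ and $\tilde S$ individually), and the new one is $Q \cap Y = Q$, which is a union of blocks-within-$Y$ of leaves and hence is compatible with $B \cap Y$ for every $B \in \ersplits(G)$ — this is again forced by the fact that the leaf set $Q$ is ``$P$-saturated'' inside $Y$. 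So \pari\ persists. Iterating this inside every block of $P$ and across both $F$ and $G$, we eventually reach forests $U' \coalof F$ and $V' \coalof G$ with $U'\restr Y$ and $V'\restr Y$ a single tree for every $Y \in P$; then coalesce further arbitrarily (respecting $P$ using Proposition~\ref{prop:compatCoal}-style freedom, since within a block there are no compatibility obstructions left) to reach \emph{fully resolved} $U \coalof F$, $V \coalof G$, and $P$ remains an agreement partition throughout.

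The main obstacle I anticipate is the bookkeeping around convexity: after coalescing $\tilde T$ and $\tilde S$ into one tree, I must exhibit the new witness partition $\tilde P'$ on $\vertices(\starif(H'))$ and check both clauses of Definition~\ref{defn:convex} — in particular that each $\tilde Y$ still separates $\starif(H')$ into connected components. The delicate case is the block $\tilde Y$ itself, since the star node and some root-edges change when we form the join; here I would lean on the observation (proved inside Lemma~\ref{lem:joinRestr}) that $r(\tilde T)$ and $r(\tilde S)$ already lie in $\tilde Y$, so absorbing the new join-node into $\tilde Y$ keeps that block connected and the others untouched. A secondary subtlety is termination: I should phrase the induction on a quantity like $\sum_{Y \in P}(\#F\restr Y - 1) + \sum_{Y\in P}(\#G\restr Y - 1)$, or simply on $|\vertices(F)| + |\vertices(G)|$, so that each coalescence strictly decreases it and the process halts at resolved trees. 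Once that potential is set up, each step is routine given Lemma~\ref{lem:joinRestr}, and the proof closes.
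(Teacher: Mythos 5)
There is a genuine gap in the step where you verify condition \pari\ after coalescing inside a block $Y$. You take \emph{arbitrary} distinct trees $T,S \in F\restr{Y}$, lift them via Lemma~\ref{lem:joinRestr}, and then claim that the one genuinely new edge split, $Q = \leaves(T)\cup\leaves(S)$ restricted to $Y$, ``is a union of blocks-within-$Y$ of leaves and hence is compatible with $B \cap Y$ for every $B \in \ersplits(G)$.'' That implication is false: condition \pari\ of Definition~\ref{def:MAP} constrains only the \emph{edge} splits of $F$ against those of $G$ within blocks; it says nothing about unions of leaf sets of trees of $F\restr{Y}$. Concretely, if within a block $Y$ the forest $F\restr{Y}$ consists of the four singletons $a,b,c,d$ while $G\restr{Y}$ contains the rsplit $\{a,b\}$, then condition \pari\ holds trivially before the step, but coalescing $a$ with $c$ creates the split $\{a,c\}$ in $Y$, which is incompatible with $\{a,b\}$. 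So the pair to be coalesced inside a block cannot be chosen freely, and without a rule for choosing it your induction step fails.

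The paper closes exactly this hole by first observing that condition \pari\ makes $F\restr{Y}$ and $G\restr{Y}$ compatible forests, and then invoking Proposition~\ref{prop:compatCoal} (the longest-path argument) to select a \emph{specific} pair $T,S \in F\restr{Y}$ whose coalescence in $F\restr{Y}$ is compatible with $G\restr{Y}$; only then is Lemma~\ref{lem:joinRestr} used to lift $T,S$ to $\tilde T,\tilde S \in F$, with its second statement handling the blocks $Z \neq Y$ just as you describe. Your remaining ingredients are essentially the paper's: the nesting statement for other blocks, the convexity check (the paper does it more cheaply via Proposition~\ref{prop:convexInfimum}, noting that the only splits lost in passing from $\rsplits(F)$ to $\rsplits(H_0)$ are separating splits that separate $\tilde T$ from $\tilde S$, which cannot lie in $M$ since $T$ and $S$ share a block of $P$, rather than by rebuilding the witness partition $\tilde P$), and the terminal ``arbitrary coalescence'' phase once every block restricts to a single tree. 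But as written, the core compatibility claim for the new split is unjustified and in general wrong, so you need to insert the Proposition~\ref{prop:compatCoal} selection step before lifting.
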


\begin{proof}
  It is enough to show that if one of the forests, say $F$, has at
  least four trees then there exists an $H_0 \strictcoalof F$ such
  that $P$ is an agreement partition for $H_0$ and $G$. 

  If for every $Y \in P$ we have that $F\restr{Y}$ is a single tree,
  then we can make $H_0$ by taking an arbitrary coalescence of $F$;
  any such coalescence will be ``broken'' by $P$ and thus will not
  introduce any splits violating \pari\ of Definition~\ref{def:MAP}. Thus we assume that $F\restr{Y}$
  has at least two trees. By Proposition~\ref{prop:compatCoal}, there
  exist nontrivial $T, S \in F\restr{Y}$ such that the coalescence of
  $T$ and $S$ in $F\restr{Y}$ is compatible with $G\restr{Y}$. 
  
  By Lemma~\ref{lem:joinRestr}, there exist $\tilde{T}$ and $\tilde{S}$ in $F$ such that
  $\left. \left[ \tilde{T} \join \tilde{S} \right]\right| _{Y} = T \join S$. 
  Let $H_0$ be the coalescence of $\tilde{T}$ and $\tilde{S}$ in $F$;
  the second statement of Lemma~\ref{lem:joinRestr} implies that the coalescence of $\tilde T$ and $\tilde S$ does not introduce any new edge splits when restricted any $Z \neq Y$ in $P$, and so $H_0$ satisfies the criterion \pari\ of a maximum agreement partition.

  Also, $P$ is convex on $H_0$, establishing criterion \parii.
  Indeed, by Proposition~\ref{prop:convexInfimum}, let $M \subseteq \rsplits(F)$ be such that $P = \infimum(M)$; we need to show that $M \subseteq \rsplits(H_0)$. 
  The only difference between $\rsplits(F)$ and $\rsplits(H_0)$ is that $\rsplits(H_0)$ does not have separating partitions which separate $\tilde{T}$ and $\tilde{S}$, but $M$ cannot contain such a partition because $T$ and $S$ both have taxa in the same partition of $P$.
\end{proof}

\begin{proof}[Proof of Proposition~\ref{thm:minEquiv}]
  Lemma~\ref{lem:commonCoal} shows that the minimum of $m(U,V)$ is
  less than or equal to $m(F,G)$. 
  The other inequality follows from Proposition~\ref{prop:coalIncr}.

  Now note that for a resolved tree on $X$ rooted at $\out$, the
  notions of maximum agreement forest and maximum agreement partition
  coincide. Thus by Theorem~2.1 of \citet{bordewichSempleSPR04},
  $m(U,V)$ is equal to the rSPR distance between $U$ and $V$ for any
  resolved $U \coalof F$ and $V \coalof G$.
\end{proof}

\subsection{Calculating the maximum agreement partition}

As introduced above, and described more clearly below, \constNJ\ needs to find a great number of agreement partitions.
Indeed, a sample \constNJ\ run with three distance matrices, 27 taxa, with pairwise constraints of size two required 5867 calls to the subroutine finding the size of a MAP.
Therefore a speedy calculation of the MAP is essential. 

In the present implementation of \constNJ, the MAP is calculated is via a simple extension of the algorithm by \citet{bordewichSempleSPR04}.
As with the usual Bordewich-Semple algorithm, we contract isomorphic subtrees and replace chains of pendant subtrees with chains of three pendant edges. 
However, we consider separating rsplits as well as edge rsplits to find the agreement partition.

An alternative would be to consider an integer linear programming (ILP) approach to the MAP problem based the work of Yufeng Wu, who has recently developed an ILP approach to finding a maximum agreement forest \citep{wuExactSPR08}. 
Although Wu's ILP approach is many orders of magnitude faster than the Bordewich-Semple algorithm for finding the size of the maximum agreement forest in the ``hard'' case when two trees are quite different, our tests have shown that it is slower in the ``easy'' case.
This difference is probably because there is overhead to creating the linear programming matrix, which does not scale strongly with respect to the difficulty of the problem, while the Bordewich-Semple algorithm is very fast for easy problems.
It is possible that some of the ILP overhead could be amortized by clever re-use of portions of the matrix across coalescences, or a combination of Bordewich-Semple and Wu ideas, but we have not followed these directions.

\noarxiv{\newpage}
\section{The \constNJ\ algorithm}

\renewcommand{\labelenumi}{\arabic{enumi}.}
\renewcommand{\labelenumii}{\alph{enumii}.}

Assume \constNJ\ is given $k$ distance matrices on a taxon set $X$.
On the way to constructing our trees $T_1,\dots, T_k$ on $X$ we will be constructing collections of forests $\bF = F_1,\dots, F_k$;
we will call such a collection $\bF$ an ``instance.''
For example, each boxed pair of trees in Figure~\ref{fig:algIdea} is an instance (after deleting the central ``star'' nodes). 
The \emph{agreement profile} for an instance $\bF$ is the $k \times k$ matrix $\agr(\bF)$ where $\agr(\bF)_{ij}$ is $m(F_i, F_j)$.
It describes the degree to which the forests agree.
The \emph{identical agreement profile} is the $k \times k$ zero matrix.
Define the \emph{instance tensor} to be a partially filled tensor of instances indexed by $\mathbb{N}^{k^2}$, where $\bF$ is stored in the ``slot'' indexed by its agreement profile $\alpha(\bF)$.
\begin{alg}[\constNJ]
  Given $n \times n$ distance matrices $D_1, \dots, D_k$ and a $k \times k$ constraint matrix $C$,
  \begin{enumerate}
    \item Let $\bF^{(0)}$ be the trivial instance, i.e. $F^{(0)}_i$ is the trivial forest on $n$ taxa for each $1 \leq i \leq k$.
      Let $\bH^{(0)}$ be the instance tensor containing only $\bF^{(0)}$.
    \item Repeat the following until termination:
      \begin{enumerate}
	\item Let $\bH$ be the instance tensor from the previous step.
	\item Rank all possible coalescences of all of the instances of $\bH$ by how much they will decrease total tree length.
	\item Make a ``step'' by walking down this ranked list in order as follows:
	  \begin{enumerate}
	    \item Perform the chosen coalescence, say of an instance $\bF$, and assume that the resulting instance $\bF'$ has agreement profile $X$.
	    \item If some entry of $X$ is greater than the corresponding element of $C$, discard $\bF'$ and test the next coalescence.
	    \item If not, and $\bF'$ is the first in this step to have agreement profile $X$, then save it. If, on the other hand, another instance has already been found in this step with agreement profile $X$, then discard $\bF'$ as it must have a larger total tree length.
	    \item Stop walking down the list if $X$ is the identical agreement profile.
	  \end{enumerate}
	\item Terminate if each of the $F_i$ have three trees or fewer.
      \end{enumerate}
  \end{enumerate}
  \label{alg:main}
\end{alg}

We now show that this algorithm is consistent.
\begin{proof}[Proof of Theorem~\ref{thm:main}]
  In broad terms, Algorithm~\ref{alg:main} is consistent because of the consistency of neighbor-joining 
  \citep{gascuelConcerningNJ97,bryantNJUniqueness05}
  and because the coalescence which most decreases the total tree length must be a neighbor-joining step \citep{desperGascuelME05}.
   We are given a sequence of distance matrices $D_1,\dots,D_k$ and a symmetric $k \times k$ constraint matrix $C$. 
   By hypothesis, these distance matrices come from a sequence of trees $T_1,\dots,T_k$ such that the rSPR distance between $T_i$ and $T_j$ is bounded above by $C_{i,j}$.
   First, by the consistency of neighbor-joining, NJ applied to each distance matrix independently will recover the correct collection of trees. 
   Say the sequence of neighbor-joining coalescences making $T_i$ gives a series of forests $F_{1,i},\dots,F_{n-1,i}$, where $F_{n-1,i} = T_i$.
   Thus by Theorem~\ref{thm:minEquiv} (more specifically, Proposition~\ref{prop:coalIncr}) and our assumptions about the $T_i$, 
   \begin{equation}
     m(F_{a,i}, F_{b,j}) \leq C_{i,j}
     \label{eq:forestsOK}
   \end{equation}
   for any $1 \leq a,b \leq k$ and $1 \leq i,j \leq n$.
   Thus the constraints will always be satisfied as long as we follow the sequence of NJ steps for each tree.
   
   Next we show by induction that given this data, at every step every \constNJ\ forest will one of the $F_{r, j}$ for $1 \le r < n$ and $1 \le j \le n-1$.
   This is clearly true at initialization.
   By induction, assume the assertion is true at some \constNJ\ step.
   Consider the coalescence which decreases total tree length as much as possible irrespective of constraints; say it occurs in $F_{k,i}$.
   As the coalescence decreases the tree length of $F_{k,i}$ compared to other coalescences of $F_{k,i}$, is also a neighbor-joining step for $F_{k,i}$, making $F_{k+1,i}$.
   By the previous paragraph, we know that this coalescence will preserve the constraints, and thus is also a \constNJ\ step 
   (recall that each \constNJ\ step decreases the total tree length as much as possible amongst coalescences which preserve the constraints).
   Thus at the end we get $F_{n-1,i}$ for each $i$ by induction. 
   Because $F_{n-1,i} = T_i$, \constNJ\ is a consistent algorithm.

\end{proof}

\subsection{Implementation}
\label{sec:implementation}

We have implemented \constNJ\ in the fast functional/imperative language \ocaml\ \citep{ocaml}.
The implementation has a simple command line interface, which is documented in the accompanying manual.
It is available for download from the author's website, at
http://www.stat.berkeley.edu/\rqtilde matsen/constNJ/ .

As described above, the primary input for \constNJ\ is a series of distance matrices, with one for each alignment block.
The program is designed to accept distance matrices from the DNADIST program of the PHYLIP package, although longer lines and taxon names are allowed.
The first taxon is assumed to be the outgroup.
The program assumes that the taxa in the distance matrices are ordered in a corresponding way. 
For instance, if one is using \constNJ\ to investigate recombination, all of the taxa should be listed in the same order, so that the taxa in the no-recombination blocks correspond to one another.
On the other hand, if one is using \constNJ\ to investigate host-parasite relationships, the $i$th taxon in the parasite alignment should parasitize the $i$th taxon in the host alignment.
If, for example, a given parasite is present in multiple hosts, this will require duplication of that parasite sequence in the alignment.

The second input for \constNJ\ is a set of constraints for the resulting correlated set of trees.
There are two options for specifying these constraints: first, via a file, or second, by enforcing ``linear'' constraints.
For example, assume we supply three distance matrices: $D_0$, $D_1$, and $D_2$, and would like to construct trees $T_0$, $T_1$, and $T_2$.
To specify constraints for these matrices, one writes one constraint per line, with first the indices of the distance matrices then the number of rSPR moves allowed between those distance matrices. 
For example, a line saying \texttt{0  2  1} would mean that $T_0$ and $T_2$ are constrained to be one rSPR move apart.
On the other hand, one may specify a linear constraint with a linear constraint parameter. 
If the linear constraint parameter is $L$, then trees $T_i$ and $T_j$ are constrained to be $L \cdot | i - j |$ rSPR moves apart.
So if we apply a linear constraint with parameter 2 in our example, then both $T_0$ and $T_1$ and $T_1$ and $T_2$ are constrained to be at most 2 rSPR moves apart, while $T_0$ and $T_2$ are constrained to be at most 4 rSPR moves apart.

The output for \constNJ\ is collection of correlated sets of trees, each of which get their own \tre\ file, along with a \lengths\ file, which describes the total tree length for each of these sets of trees.
\constNJ\ returns at most one correlated set of trees for each agreement profile within the constraints, which is labeled by the agreement profile.
If the constraints are given in a file, then the agreement profile is written in the order given in the file.
If linear constraints are given, the agreement profile is written as a vector representing an upper triangular matrix in the usual way.
For example, the agreement profile for three trees with linear constraints is written $\left( \drSPR(T_0, T_1), \drSPR(T_0, T_2), \drSPR(T_1, T_2) \right)$, so the set of trees in the file \texttt{example.2\_1\_1.tre} has agreement profile (2,1,1).
The \lengths\ file contains the information on tree lengths, as in Table~\ref{tab:motiv} of the introduction.
Namely, for each correlated set of trees returned by \constNJ, it displays the total tree length for those trees.

\subsection{Speed}

A rigorous worst-case runtime analysis of \constNJ\ would show that it can be incredibly slow.
Indeed, the maximum agreement partition is a generalization of the maximum agreement forest; thus finding the size of the MAP is NP-hard by the corresponding theorem by \citet{bordewichSempleSPR04}.
However, \constNJ\ does not just need to solve one such problem, it needs to solve quite a number of them. 
At worst, \constNJ\ would need to find as many MAP's as there are possible coalescences, just for a single step and a single instance; if an instance had forests with $\ell_1, \dots, \ell_k$ trees, then there will be ${\ell_1 \choose 2} \times \dots \times {\ell_k \choose 2}$ possible coalescences, each of which in theory could require solving of a MAP problem. 
At any step there can be as many instances as there are agreement profiles satisfying the constraint matrices, and a problem with $n$ taxa and $k$ distance matrices will require $n k$ such steps.
Such an analysis would not give a very clear understanding of the practical time requirements of running \constNJ.

In practice, \constNJ\ can be used effectively for a moderate number of taxa and a small number of closely-constrained trees.
The running time depends somewhat on the number of taxa, but quite a lot on the constraints and number of distance matrices.
Indeed, the main bottleneck is the MAP calculation, and the running time of the MAP calculation depends very strongly on the constraints and the number of distance matrices.

However, what may be surprising is how much the running time depends on the quality of the data.
This is vividly illustrated by the simulations, 
where in the case of two trees with two reticulation events and divergence of 0.1 mutation per site per tree, the sequences with 100 sites took on average 10.3 minutes to run, while the simulations with 6400 sites took on average 0.68 seconds each.
This represents a difference of almost three orders of magnitude.
On the same processor (Intel~\textsuperscript{\textregistered}\ Xeon~\textsuperscript{\textregistered}\ CPU at 2.33GHz) using real HIV data, 
an example with three 38-taxon distance matrices and pairwise constraints of three for each pair of distance matrices took 49 seconds, while an example with only two 40-taxon distance matrices with a single constraint of size three took almost 21 minutes.
The quality of the data impacts ``how far'' \constNJ\ has to go down the list of coalescences in order to find one with the desired agreement profile, and how often it needs to calculate a new agreement partition.

We have made some coding choices to increase the speed.
For example, there is a natural partial order on agreement profiles, which is just the element-wise numerical order.
In considering which coalescences to perform, we only investigate those coalescences which could lead to an agreement profile which is smaller than those which have already been performed.
In principle, one could do a more comprehensive search which might lead to more optimal sets of trees; we have not found a significant improvement following such a direction.

\noarxiv{\newpage}
\section{Simulations}

In order to evaluate the performance of \constNJ, we performed a number of simulations.
The trees in the study were generated as follows.
We choose the number of trees in the recombination network, say $k$, the size of the trees, say $n$, and a number of rSPR moves, say $m$. 
We start with a tree $T_1$ drawn from the Yule distribution of trees on $n$ taxa.
After choosing the desired expected number of substitutions on the tree (in simulations below, 0.1, 0.5, and 2), we divided this number by the number of non-root edges in the tree to get the expected number of substitutions per edge.
We then drew the actual number of substitutions per edge from the exponential distribution with the corresponding mean to get the branch lengths of $T_1$.

We then generated $T_{i+1}$ from $T_{i}$ by applying $k$ rSPR moves to $T_i$ as follows. 
For each rSPR move, first select a non-root edge uniformly; call the subtree below the chosen edge $S$. 
Cut off $S$ then glue it back in on a uniformly selected edge of $T_1$ not contained in $S$.
The location along the chosen edge to attach $S$ is chosen uniformly.
Then to simulate differential rates of evolution of different regions, take the average of the previous branch length and a branch length drawn from an exponential distribution as before.

Given such a series of trees $T_1,\dots,T_k$, we generated a collection of distance matrices $D_1,\dots,D_k$ by simulating sequences on the trees.
We did so using the Jukes-Cantor model of sequence evolution with a single rate.
Distances were then calculated using the Jukes-Cantor distance correction \citep{felsensteinBook04}. 
In case the Jukes-Cantor correction gave an undefined value, we repeated the analysis with a new sequence.
We chose the simple Jukes-Cantor model to focus attention on our method rather than the distance estimator.

\begin{figure}[ht]
  \begin{center}
    \hspace{-1cm}
    \arxiv{\includegraphics[width=5in]{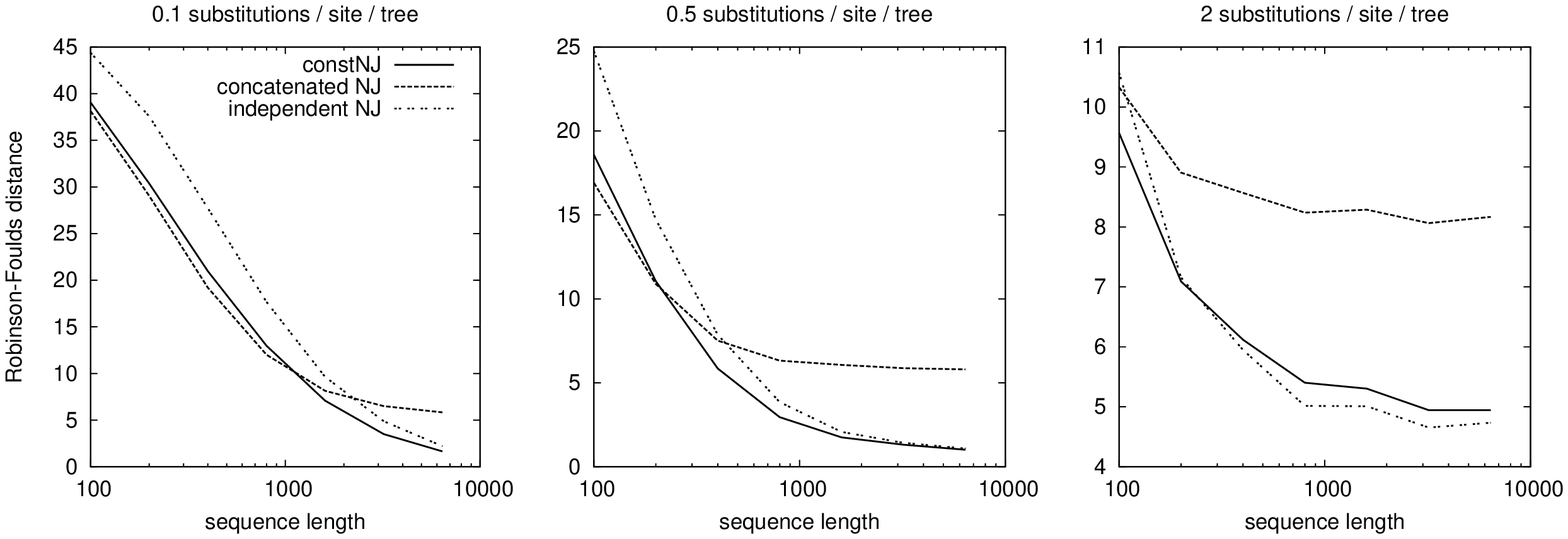}}
  \end{center}
  \caption{
  \constNJ\ simulation results for two trees, each on 30 taxa, averaged over 400 replicates. The first tree was drawn from the Yule distribution, and the second tree was made by applying a random rSPR move to the first. ``constNJ'' is our algorithm, ``concatenated NJ'' is neighbor-joining run with a concatenated alignment, and ``independent NJ'' is neighbor-joining run independently on the alignments for the different trees as described in the text.}
  \label{fig:t2r1n30}
\end{figure}

\begin{figure}[ht]
  \begin{center}
    \hspace{-1cm}
    \arxiv{\includegraphics[width=5in]{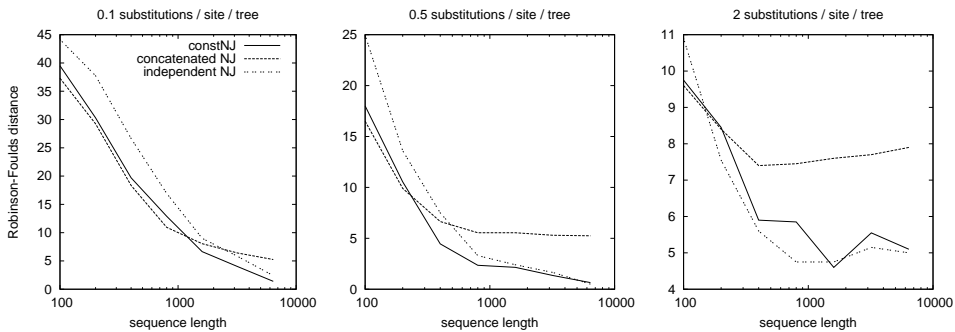}}
  \end{center}
  \caption{
  \constNJ\ simulation results for two trees, each on 30 taxa, averaged over 400 replicates. This time, two rSPR moves were applied to the first tree to get the second. 
  }
  \label{fig:t2r2n30}
\end{figure}

\begin{figure}[ht]
  \begin{center}
    \hspace{-1cm}
    \arxiv{\includegraphics[width=5in]{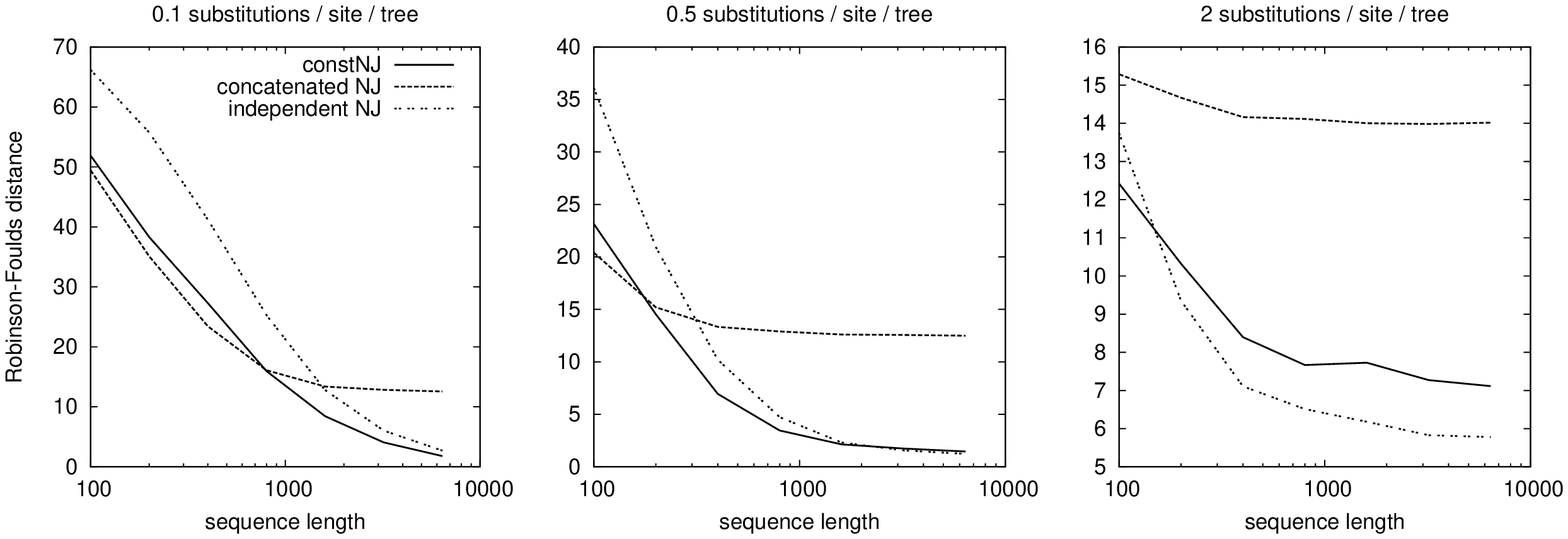}}
  \end{center}
  \caption{
  \constNJ\ simulation results for three trees, each on 30 taxa, averaged over 400 replicates. 
  Here one random rSPR move was done to change the first tree to the second tree, and the second tree to the third. 
  }
  \label{fig:t3r1n30}
\end{figure}

For the first set of simulations, we wanted to understand how the topological accuracy of \constNJ\ compares to that of concatenating alignment blocks or running them independently. 
For concatenation, we estimated a single distance for each pair of taxa by taking the Jukes-Cantor correction of the average number of substitutions in each alignment block; such a procedure simulates the process of concatenating equal-length alignment blocks.
We then considered the resulting tree as the output of running NJ on the concatenated alignment for each alignment block.
For independent construction, we simply ran NJ on each distance matrix independently. 
For \constNJ, we constrained the rSPR distance between the trees to be less than or equal to the number of rSPR moves used to generate the trees. 
The trees used in the comparison were then the shortest (i.e. smallest total tree length) trees returned given those constraints.

To measure topological accuracy, we used the Robinson-Foulds distance \citep{robinsonFouldsComparison81}, which is simply one half the size of the symmetric difference of the edge split sets. 
The results are shown in Figures~\ref{fig:t2r1n30}, \ref{fig:t2r2n30}, and \ref{fig:t3r1n30}. 
In these simulations, \constNJ\ typically outperforms either alternate strategy.
When sequences are short, the main source of error is insufficiently accurate distance estimations;
concatenation increases the amount of useful sequence information for distance estimation, and so outperforms independent construction in that case.
However, \constNJ\ does almost as well.
On the other hand, when sequences are long, independent estimation does well, as there is enough sequence information to reconstruct the tree for each block independently.
In that case, \constNJ\ also does well. 

The reader may object that these graphs represent an unfair comparison, as they assume that the number of reticulation events is correctly bounded in advance. 
The next two simulations address this objection.
The first set, with results shown in Figure~\ref{fig:lengthTest1}, seems to indicate that by looking at the \lengths\ file one can do a reasonable job of deciding how many rSPR moves to allow as was done in the example case of the introduction.
The second set, with results shown in Figure~\ref{fig:interp}, concerns what happens if one makes an incorrect decision.

Figure~\ref{fig:lengthTest1} explores one of the main themes of this paper, which is the trade-off between phylogenetic optimality (in this case total tree length) and congruence among individual trees.
To make this figure, we generated pairs of trees as before, generating a Yule tree and then applying some number of rSPR moves to get the second tree, except that this time we threw out pairs of trees which did not have the correct rSPR distance between them (i.e. when a subtree was moved back to its original location).
We drew branch lengths as above then simulated 1000 sites with an expectation of 0.5 mutations per site per tree.
The $x$-axis is the index of the \lengths\ file, i.e. the number of rSPR moves between the two reconstructed trees.
The left $y$-axis, ``average total length'', shows the average length of trees with that number of rSPR moves between them.
For instance, consider the point on the line labeled ``three rSPR moves'' which is at $x$-value 2. 
This says that if we simulate a pair of trees which are three rSPR moves apart as described above, then we expect the pair of trees output by \constNJ\ with agreement profile two to have total length about 1.038.
Note that \constNJ\ does not always return a tree for every agreement profile which is allowed under the constraints.
In those cases, we simply took the total tree length from the largest non-empty agreement profile.

\begin{figure}
  \begin{center}
    \arxiv{\includegraphics[width=3in]{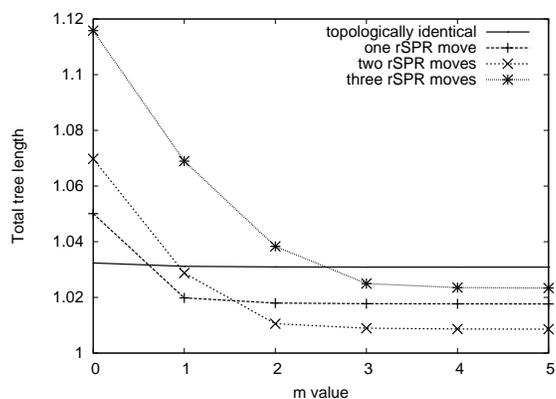}}
  \end{center}
  \caption{
  Comparison of the total tree lengths for simulated trees differing by the described number of moves and then reconstructed using \constNJ; average of 100 simulations.
  As can be seen, the most significant decreases in the total tree length happen when getting to the correct number of rSPR moves, after which the plot levels off. 
  For example, on the line ``two rSPR moves,'' there are significant decreases in length when going to one and two rSPR moves, but not much decrease after that.
  Thus, at least in simulation, it appears possible to make a reasonable choice concerning the number of rSPR moves to allow between the two trees.
  }
  \label{fig:lengthTest1}
\end{figure}

Figure~\ref{fig:lengthTest1} shows exactly what one might expect. 
Namely, if we generate pairs of identical trees, then not much improvement in terms of total tree length is gained by allowing the trees to differ.
However, if the trees are one rSPR move apart, then there is a substantial drop when allowing one rSPR move, but not much more after that; this indicates that only one rSPR move is called for by the data.
The situation is similar for the other numbers of rSPR moves.
Thus, at least in simulation with good quality data, it appears that one should be able to make a reasonable judgment as to the correct number of rSPR moves for the data set at hand, as was done in the introduction.

We also performed some simulations allowing an incorrect number of rSPR moves (Figure~\ref{fig:interp}).
As shown there, giving a too-small constraint interpolates between results from concatenated data and the correct specification, while too-large constraints give performance similar to the correct constraint.
One might expect \constNJ\ with too-large constraints to give results similar to independent NJ; we do not have a clear explanation why this is not the case.

\begin{figure}[ht]
  \begin{center}
    \hspace{-1cm}
    \arxiv{\includegraphics[width=3in]{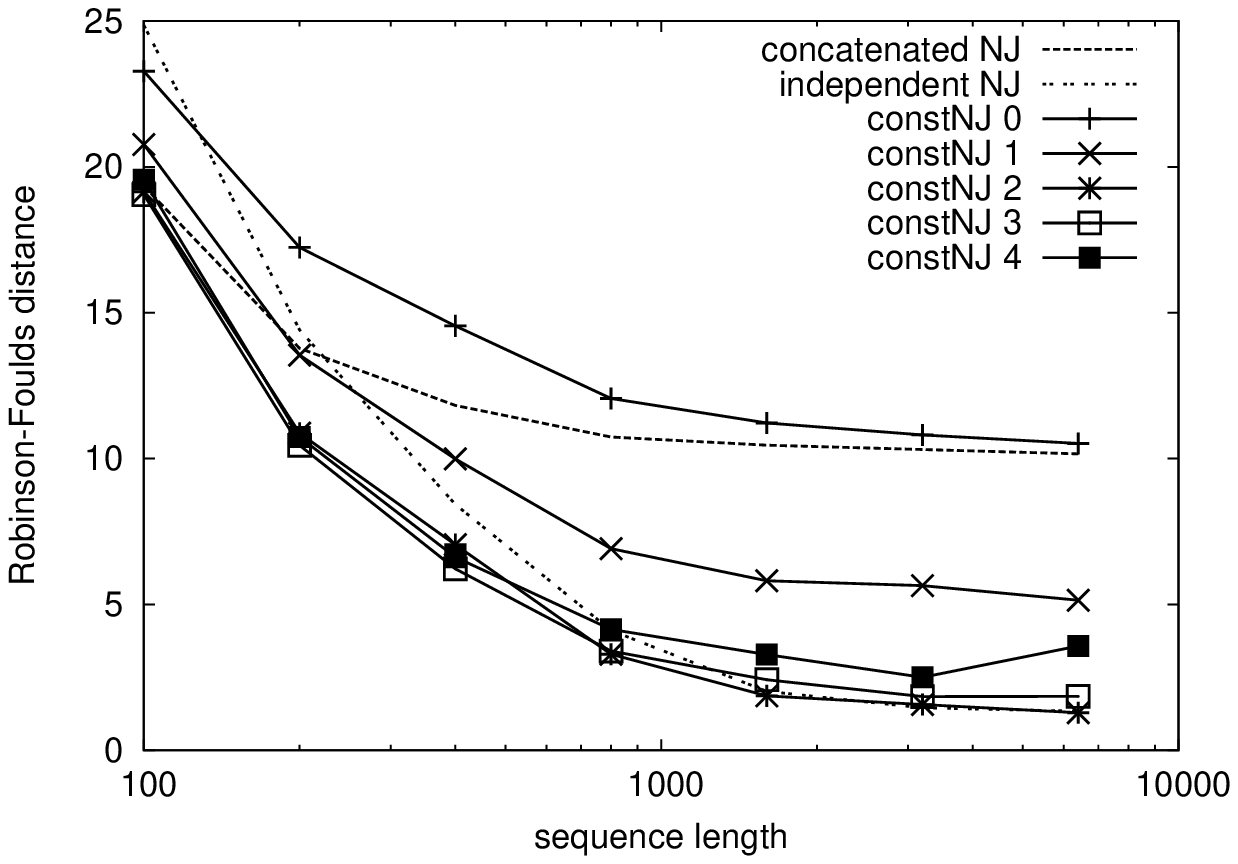}}
  \end{center}
  \caption{
  Comparison of various specified constraints for \constNJ; average of 100 simulations. Data was simulated on two trees, each on 30 taxa, such that two random rSPR moves were done to change the first tree to the second tree. Then reconstruction was done with rSPR distance constraints of 0, 1, 2, 3, and 4. As would be expected, having a constraint of 0 (identical trees) has qualitative performance similar to that of concatenated NJ, the best performance is obtained by the correct constraint of 2 moves, and a constraint of 1 gives results between those for 0 and 2. The performance of 3 is similar to that for 2, while 4's performance degrades with accurate distances.
  }
  \label{fig:interp}
\end{figure}

\noarxiv{\newpage}
\section{Conclusion}

In this paper we present \constNJ, a consistent distance-based algorithm for a collection of trees with pairwise rSPR constraints, such as those constraints satisfied by collections of trees which fit into a reticulation network. 
\constNJ\ is deterministic and a strict generalization of the neighbor-joining algorithm. 
In order to ensure that the resulting set of trees satisfy the specified constraints on rSPR distance, we develop the theory of maximum agreement partitions, culminating in Theorem~\ref{thm:main}.

We hope that this algorithm is the beginning of a new direction for phylogenetic inference of reticulation networks.
We simplify the problem considerably by assuming that the alignment blocks are known in advance; in doing so we preserve the correlation between sites in the alignment with the same history.
Rather than first finding trees and then attempting to put them into a recombination network, we investigate the balance between discordance between trees and optimality of the ensemble of trees.
By using trees whose rooting is derived from outgroups, we find explicit evolutionary histories.

To our knowledge, \constNJ\ is the first algorithm of its kind, and there is considerable room for improvement over this first attempt.
First, we enforce pairwise bounds on the rSPR distance between trees, which is a relatively weak way to show that these trees fit into a network. 
A more explicit approach would be desirable. 
Second, distance-based methods waste a considerable amount of information which is used by likelihood-based methods; a logical next step would be to create a likelihood-based method. 
Doing so would require a collection of ``moves'' analogous to rooted nearest-neighbor-interchange or rSPR moves for a heuristic search, but for collections of trees, with the constraint that the moves don't change the rSPR distance between trees too much.
One option would be to explicitly store a reticulate network in memory and have the trees moving about inside the reticulate network while the network changes.
Such an algorithm would do a better job of actually reconstructing a reticulate network.
Third, an alternative direction for heuristic optimization might be to do a more complete search for the minimum length tree in a manner analogous to algorithms searching for the BME tree.
Fourth, a more immediate issue is that \constNJ\ does not reconstruct branch lengths. 
It would be possible to do a distance-based branch length estimation in a manner similar to that for usual neighbor-joining, but the fact that we are choosing trees which may be sub-optimal according to the NJ criterion implies that negative branch lengths might be encountered.
Alternatively, one might use a program such as PHYML \citep{guindonGascuelPhyml03} to estimate branch lengths on each fixed topology independently.
This appears to be a reasonable way to proceed, despite the fact that the correlation between branch lengths of the different trees is lost.
A more correct approach will require some sort of correlation of the branch lengths in a model-based manner, and we believe that such reconstruction is probably best done in the context of a complete likelihood-based approach as described above.
Finally, it might be interesting to incorporate some \constNJ\ ideas into bootscanning-type methods for recombination breakpoint inference.

Eight years ago, \citet{kuhnerEaMlRecombination00} wrote
``[w]hen recombination occurs adjacent sites may have different, although correlated, genealogical histories. Reconstructing these genealogies with certainty is impossible.'' 
Although we do not claim certainty for this (or any forthcoming) algorithm attempting to reconstruct reticulate phylogenetic history, we think that there is cause for optimism and look forward to seeing future developments in this area.

\subsection*{Acknowledgments}

The author would like to thank Lior Pachter for a number of helpful early discussions.
He is also very grateful to the Mullins HIV lab at the University of Washington for the ongoing collaboration which led indirectly to the present work, and to the Armbrust Oceanography lab at the University of Washington for use of their computing cluster.
The author made use following useful software: Figtree \citep{rambautFigtree}, PHYML \citep{guindonGascuelPhyml03}, and PHYLIP \citep{felsensteinPhylip04}.

\bibliographystyle{jcbnat}
\bibliography{distrec}

\end{document}